\documentclass[preprint,3p, authoryear]{elsarticle} 
\usepackage[utf8x]{inputenc} 

\usepackage[a4paper,colorlinks,breaklinks,unicode]{hyperref}

\usepackage{natbib}

\usepackage{amsmath, amsthm, amssymb}
\usepackage{longtable}
\usepackage{rotating}
\usepackage{xcolor}

\usepackage{tikz}
\usetikzlibrary{automata,positioning}
\usetikzlibrary{shapes.multipart}
\usetikzlibrary{arrows.meta}

\usepackage{subcaption}
\usepackage{multirow}
\usepackage{algorithm}
\usepackage{algorithmic}
\usepackage{rotating}
\usepackage{mathtools}
\usepackage{tabularx}
\usepackage{booktabs}
\usepackage[detect-all]{siunitx}
\usepackage{lscape}
\usepackage[title]{appendix}

\graphicspath{{figures/}}

\newtheorem{thm}{Theorem}[section]
\newtheorem{prop}[thm]{Proposition}
\newproof{pf}{Proof}

\newcommand{\comment}[1]{}
\newcommand{\bftab}{\fontseries{b}\selectfont}

\begin{document}
	
	\title{An Exact Cutting Plane Algorithm to Solve the Selective Graph Coloring Problem in Perfect Graphs \tnoteref{thanks}}
	
	\tnotetext[thanks]{This study is supported by Bo\u{g}azi\c{c}i University Research Fund (grant 11765); and T. Ekim is supported by Turkish Academy of Sciences GEBIP award.}

	\author[1]{Oylum \c{S}eker \corref{cor1}}
	\ead{oylum.seker@utoronto.ca}
	
	\author[2]{T{\i}naz Ekim}
	\ead{tinaz.ekim@boun.edu.tr}
		
	\author[2]{Z. Caner Ta\c{s}k{\i}n}	
	\ead{caner.taskin@boun.edu.tr}
	
	\cortext[cor1]{Corresponding author}
	
	\address[1]{Department of Mechanical and Industrial Engineering, University of Toronto, Toronto, ON, M5S3G8, Canada}
	\address[2]{Department of Industrial Engineering, Bo\u{g}azi\c{c}i University, 34342, Bebek, Istanbul, Turkey}
	
	\begin{abstract}
		We consider the selective graph coloring problem, which is a generalization of the classical graph coloring problem. Given a graph together with a partition of its vertex set into clusters, we want to choose exactly one vertex per cluster so that the number of colors needed to color the selected set of vertices is minimized. This problem is known to be NP-hard. In this study, we focus on an exact cutting plane algorithm for selective graph coloring in perfect graphs. Since there exists no suite of perfect graph instances to the best of our knowledge, we also propose an algorithm to randomly (but not uniformly) generate perfect graphs, and provide a large collection of instances available online. We conduct computational experiments to test our method on graphs with varying size and densities, and compare our results with a state-of-the-art algorithm from the literature and with solving an integer programming formulation of the problem by CPLEX. Our experiments demonstrate that our solution strategy significantly improves the solvability of the problem.

	\end{abstract}
	
	\begin{keyword}
	Graph theory; selective graph coloring; partition coloring; cutting plane algorithm; perfect graph generation
	\end{keyword}
	
	\maketitle

\newpage	

\section{Introduction}
\label{sec:intro}

Graph coloring is an assignment of labels, or ``colors" as typically called, to the vertices of a graph in such a way that no pair of vertices that are linked by an edge receives the same color. The general structure of problems that are modeled as a graph coloring problem consists of a set of entities and possible incompatibilities between them, represented by vertices and edges, respectively. A proper color assignment to vertices of a graph means to divide the corresponding set of entities into distinct groups so that incompatible entities do not belong to the same group. Graph coloring arises in a variety of practical areas including scheduling \citep{marx2004graph}, frequency assignment \citep{hale1980frequency}, sudoku puzzles \citep{lewis2015guide}, and more.

As an example to a graph coloring application, suppose that we have a set of tasks, which take place in predetermined and possibly overlapping time slots, for which we want to assign a minimum number of workers. In order to express the problem in the domain of graphs, we can represent each task with a vertex, and link a pair of vertices by an edge if the two tasks take place in the same time slot. If we color the vertices of such a graph, each color corresponds to a distinct worker, and the least number of colors yields the minimum number of workers necessary to carry out all the tasks. 

We can consider a more flexible adaptation of the graph coloring application mentioned above into a ``selective" framework. Suppose that the given set of tasks are divided into distinct groups, or categories, such that it is enough to carry out only one representative task per group. The goal is to perform a sufficient set of tasks with a minimum number of workers. Representing each task with a vertex, task categories with a partition of the vertex set, and pairs of tasks occurring in same time slot with edges, the aim is to select exactly one task from each group in such a way that the number of workers required to complete the selected set of tasks is minimized. When we pick one vertex per group and color the selected ones, each color will correspond to a distinct worker, and the set of vertices having the same color stands for the set of tasks assigned to that particular worker. This new problem is an example to the selective graph coloring problem, and it inherently has two layers; the selection, and the coloring of it.  

As the example graph coloring application and its adaptation into a selective scheme reveal, the \textit{selective graph coloring} problem ({\sc Sel-Col}), which is alternatively referred to as \textit{partition coloring} in the literature, generalizes the classical graph coloring problem. Given a graph together with a partition of its vertex set into clusters, the aim in {\sc Sel-Col} is to choose exactly one vertex from each cluster in such a way that, among all possible vertex selections, the resulting number of colors required to color the selected set of vertices is minimized. When each cluster is comprised of a single vertex, every vertex of the input graph needs to be selected and colored, which makes the classical graph coloring problem a special case of {\sc Sel-Col}.
It has emerged as a model to select routes and assign proper wavelengths in the second step of a two-phase solution approach for the routing and wavelength assignment problem (RWA) in optical networks \citep{li2000partition}.

{\sc Sel-Col} is known to be an NP-hard problem, which follows from the fact that it is a generalized version of the classical graph coloring problem that is NP-hard. Moreover, it remains NP-hard in many special graph families including various subclasses of perfect graphs \citep{demange2015some}, and hence in the general class of perfect graphs. 
There are two tiers in the difficulty of {\sc Sel-Col};  it may be due to an exponential number of possible selections and/or due to the hardness of optimally coloring a given selection \citep{demange2015some}. Hence, rendering one aspect of the problem somehow easy does not  necessarily rid us of the overall difficulty.

To the best of our knowledge, there exist three studies in the literature that concentrate on exact solution methods for {\sc Sel-Col}. The study by \cite{frota2010branch} introduces an integer programming model and a branch-and-cut algorithm for the partition coloring problem.
\cite{hoshino2011branch} propose another integer programming model and a branch-and-price algorithm, which is shown to demonstrate superior performance to the method by \cite{frota2010branch}.
Finally, a recent study by \cite{furini2017exact} proposes a new formulation with an exponential number of variables and designs a branch-and-price algorithm, which improves on the previous exact approaches from the literature. 
In our previous work \citep{seker2019decomposition}, we investigated {\sc Sel-Col} in certain subclasses of perfect graphs, and proposed efficient exact solution algorithms that exploit special characteristics of the graph families under consideration. In this paper, we generalize our earlier approach for {\sc Sel-Col} to the general class of perfect graphs.

There exist various applications that motivate the study of {\sc Sel-Col} in perfect graphs. 
In many real life problems that can be modelled as {\sc Sel-Col}, the application domain yields host graphs that admit characteristics of certain perfect graph families. 
Examples include timetabling problems calling for interval graphs, particular cases of multiple stacks travelling salesperson problem that bring in permutation graphs, quality test scheduling problems yielding linear interval graphs, and more \citep{demange2015some}.
In this regard, {\sc Sel-Col} in the class of perfect graphs serves to consolidate many models sharing a common domain.

The importance of perfect graphs is not confined to {\sc Sel-Col} applications. The class of perfect graphs has led to a key area of interest in graph theory due to the numerous connections it has to a wide range of fields including linear programming and computational complexity.  
It has great significance for several reasons. First, many problems that are NP-hard in general, including the minimum coloring and maximum clique problems, become polynomially solvable when restricted to the class of perfect graphs \citep{grotschel1984polynomial}.
Moreover, for many subclasses of perfect graphs, there exist coloring and clique algorithms that are not only polynomial-time but also of combinatorial nature \citep{golumbic2004algorithmic}. These subclasses, such as chordal graphs, permutation graphs, and interval graphs have additional importance as they naturally arise in various real life applications like perfect phylogeny, DNA sequencing, timetabling, and flight altitude assignment \citep{brandstadt1999graph, spinrad2003efficient, golumbic2004algorithmic}. In this respect, perfect graphs form an umbrella class that unifies the results relating to the complexity of important problems in various graph classes.

The polynomial-time algorithms that can solve some of the aforementioned problems in the general class of perfect graphs, such as the maximum clique problem, are not purely combinatorial; they make use of semidefinite programming models. 
Even though these methods are polynomial-time in theory, they are known to demonstrate poor performance in practice \citep{grotschel1984polynomial}. In order to observe how the performance of such algorithms manifests in practice, it is important to have a collection of perfect graph instances or a method to generate them.

Generation of perfect graphs in its general form, rather than from subclasses of it, has a considerable potential to contribute to the literature by providing a means to test the algorithms specifically designed for perfect graphs. 
To the best of our knowledge, a method to generate general perfect graphs has not been proposed before.
Even though there exists a polynomial-time recognition algorithm for perfect graphs \citep{chudnovsky2005recognizing}, generating a random graph and testing for perfectness may not be a viable course of action to obtain perfect graph instances, because this recognition algorithm is not practical even for small graphs, as pointed out in \citep{yildirim2006extracting}.
Alternatively, one may resort to producing instances from certain subclasses of perfect graphs. 
For instance, two such subclasses for which random generation algorithms are available are chordal graphs \citep{andreou2005generating, markenzon2008two, seker2017linear} and generalized split graphs \citep{mcdiarmid2016random, seker2019decomposition}. 
However, as \cite{yildirim2006extracting} note, this approach would be fairly restrictive in nature since there are at least 120 known subclasses of perfect graphs \citep{hougardy2006classes}.

In this study, we present a cutting plane algorithm for {\sc Sel-Col} in perfect graphs, which is generalization of our previous work \citep{seker2019decomposition}, and also propose a perfect graph generation algorithm, which, to the best of our knowledge, is the first one in the literature that is capable of producing instances from the general class of perfect graphs and hence serves as a first step to fill an important gap in the literature.
Using the proposed generator, which does not guarantee that every perfect graph can be generated with positive probability, we produced a large suite of random perfect graph instances with varying size and densities, and made them accessible online.

We test the performance of our solution approach using the problem instances generated, and compare our results to those of an IP formulation and a branch-and-price algorithm by \cite{furini2017exact}.
The results show that our cutting plane algorithm significantly improves the solution performance, and the improvement manifests most noticeably in low-density graphs (see Section \ref{section:compstudy}). Additionally, we compare the performance of our algorithm for general perfect graphs to that of our previous algorithm tailored for three subclasses of perfect graphs, which are permutation, generalized split, and chordal graphs. Our cutting plane algorithm for general perfect graphs results in better performance in permutation graphs, and marked deterioration in the class of chordal graphs. As for generalized split graphs, we observe that, with the algorithm for general perfect graphs, the overall performance is comparable to our specially tailored algorithm in \citep{seker2019decomposition}.

The rest of this article is organized as follows. In Section \ref{prelim}, we provide some preliminary graph-theoretic definitions and information that relate to perfect graphs and {\sc Sel-Col}.
We give an integer programming formulation and describe our cutting plane algorithm in Section \ref{sec:cuttingplane}, and follow it with the review of two existing methods that we employ within our solution framework in Section \ref{sec:submethods}.
In Section \ref{section:pggen}, we introduce our random perfect graph generation method. 
In Section \ref{section:compstudy}, we report the computational results of our cutting plane approach in comparison to those of the integer programming formulation and a state-of-the-art algorithm by \cite{furini2017exact}.
Finally, we conclude our study in Section \ref{section:conclusion} with a brief summary and possible future research directions.


\section{Definitions}
\label{prelim}

A \textit{graph} $G=(V,E)$ is an ordered pair, where $V$ denotes the set of \textit{vertices} (or \textit{nodes}) and $E$ the set of \textit{edges} that are pairs of vertices. %
A pair of vertices in a graph are called \textit{adjacent} if they are linked by an edge. 
A vertex $u$ is a \textit{neighbor} of a vertex $v$ if there exists an edge $\{u,v\}$. 
The \textit{neighborhood} of a vertex $ v $, $ N(v) $, is the set of all vertices that are adjacent to it. 

The \textit{complement} of a graph $G=(V,E)$, shown as $\bar{G}$, is a graph with the same vertex set $V$ where two distinct vertices of $\bar{G}$ are made adjacent if and only if they are not adjacent in $G$. An \textit{induced subgraph} of $G=(V,E)$ is a graph formed by a subset $ V^{'} $ of $ V $ where the set of edges that exist between pairs in $ V^{'}$ in $G$ are all preserved. For a graph $G=(V,E)$ and  $V'\subseteq V$, we denote the subgraph induced by $V'$ by $G[V']$.

A (simple) \textit{cycle} is a sequence of vertices that are consecutively adjacent, which begins and ends at the same vertex and does not repeat any vertices in between. 
When a cycle contains an odd number of vertices, it is called an \textit{odd cycle}. 
A \textit{clique} in a graph is a subset of vertices such that all vertices in the subset are pairwise adjacent. 
A given clique in a graph is called \textit{maximal} if it cannot be extended by incorporating any other vertex, i.e., if it is not contained within another clique. 
The \textit{clique number} of a graph $G$, denoted by $\omega(G)$, is the number of vertices in a largest clique in $G$. 
A set of vertices in a graph form a \textit{stable set}, or equivalently an \textit{independent set}, if no two vertices in the set are adjacent. The size of a largest stable set in a graph $G$ is called the \textit{stability number} and is shown by $\alpha(G)$.

A coloring of a graph is called a (proper) \textit{$k$-coloring} if it uses at most $k$ colors. 
If a $k$-coloring can be assigned to the vertex set of a graph, then that graph is called {$k$-colorable}. The minimum number of colors necessary to color all vertices of a graph $ G $ is called the \textit{chromatic number} of $ G $, and is denoted by $\chi(G)$.
A graph $G$ is $k$-colorable for all $ k \geq \chi(G)$.

\begin{figure}[!h]
	\centering
	\scalebox{0.6}[0.6]{ 
		\begin{tikzpicture}[main_node/.style={circle,fill=white!80,draw,inner sep=0pt, minimum size=18pt}, line width=0.75pt]	
		\node[main_node] (v1) at (-4,2) {1};
		\node[main_node] (v2) at (-1,2) {2};
		\node[main_node] (v3) at (-1,-1) {3};
		\node[main_node] (v4) at (-4,-1) {4};
		\node[main_node] (v5) at (-3.25,1.25) {5};
		\node[main_node] (v6) at (-1.75,1.25) {6};
		\node[main_node] (v7) at (-1.75,-0.25) {7};
		\node[main_node] (v8) at (-3.25,-0.25) {8};
		\draw (v1) -- (v2) -- (v3) -- (v4) -- (v1) -- (v5) -- (v6) -- (v2);
		\draw (v6) -- (v7) -- (v3);
		\draw (v7) -- (v8) -- (v4);
		\draw (v5) -- (v8);
		\draw[dashed, draw=red, rotate around={38:(-3.75,1.75)}] (-3.75,1.75) ellipse (0.55 and 1.25);
		\draw[dashed, draw=red, rotate around={-38:(-1.25,1.75)}]  (-1.25,1.75) ellipse (0.55 and 1.25);
		\draw[dashed, draw=red, rotate around={38:(-1.25,-0.75)}]  (-1.25,-0.75) ellipse (0.55 and 1.25);
		\draw[dashed, draw=red, rotate around={-38:(-3.75,-0.75)}]  (-3.75,-0.75) ellipse (0.55 and 1.25);
		\node at (-5,2.5) {\textcolor{red}{$V_1$}};
		\node at (0,2.5) {\textcolor{red}{$V_2$}};
		\node at (-5,-1.5) {\textcolor{red}{$V_3$}};
		\node at (0,-1.5) {\textcolor{red}{$V_4$}};
		\node[main_node,fill=gray!40] (v9) at (4,2) {1};
		\node[main_node,fill=gray!100] (v10) at (7,2) {2};
		\node[main_node,fill=gray!40] (v11) at (7,-1) {3};
		\node[main_node,fill=gray!100] (v12) at (4,-1) {4};
		\node[main_node] (v13) at (4.75,1.25) {5};
		\node[main_node] (v14) at (6.25,1.25) {6};
		\node[main_node] (v15) at (6.25,-0.25) {7};
		\node[main_node] (v16) at (4.75,-0.25) {8};
		\draw (v9) -- (v10) -- (v11) -- (v12) -- (v9) -- (v13) -- (v14) -- (v10);
		\draw (v14) -- (v15) -- (v11);
		\draw (v15) -- (v16) -- (v12);
		\draw (v13) -- (v16);
		\draw[dashed, draw=red, rotate around={38:(4.25,1.75)}] (4.25,1.75) ellipse (0.55 and 1.25);
		\draw[dashed, draw=red, rotate around={-38:(6.75,1.75)}]  (6.75,1.75) ellipse (0.55 and 1.25);
		\draw[dashed, draw=red, rotate around={38:(6.75,-0.75)}]  (6.75,-0.75) ellipse (0.55 and 1.25);
		\draw[dashed, draw=red, rotate around={-38:(4.25,-0.75)}]  (4.25,-0.75) ellipse (0.55 and 1.25);
		\node at (3,2.5) {\textcolor{red}{$V_1$}};
		\node at (8,2.5) {\textcolor{red}{$V_2$}};
		\node at (3,-1.5) {\textcolor{red}{$V_3$}};
		\node at (8,-1.5) {\textcolor{red}{$V_4$}};
		\node[main_node,fill=gray!100] (v17) at (12,2) {1};
		\node[main_node] (v18) at (15,2) {2};
		\node[main_node,fill=gray!100] (v19) at (15,-1) {3};
		\node[main_node] (v20) at (12,-1) {4};
		\node[main_node] (v21) at (12.75,1.25) {5};
		\node[main_node,fill=gray!100] (v22) at (14.25,1.25) {6};
		\node[main_node] (v23) at (14.25,-0.25) {7};
		\node[main_node,fill=gray!100] (v24) at (12.75,-0.25) {8};
		\draw (v17) -- (v18) -- (v19) -- (v20) -- (v17) -- (v21) -- (v22) -- (v18);
		\draw (v22) -- (v23) -- (v19);
		\draw (v23) -- (v24) -- (v20);
		\draw (v21) -- (v24);
		\draw[dashed, draw=red, rotate around={38:(12.25,1.75)}] (12.25,1.75) ellipse (0.55 and 1.25);
		\draw[dashed, draw=red, rotate around={-38:(14.75,1.75)}]  (14.75,1.75) ellipse (0.55 and 1.25);
		\draw[dashed, draw=red, rotate around={38:(14.75,-0.75)}]  (14.75,-0.75) ellipse (0.55 and 1.25);
		\draw[dashed, draw=red, rotate around={-38:(12.25,-0.75)}]  (12.25,-0.75) ellipse (0.55 and 1.25);
		\node at (11,2.5) {\textcolor{red}{$V_1$}};
		\node at (16,2.5) {\textcolor{red}{$V_2$}};
		\node at (11,-1.5) {\textcolor{red}{$V_3$}};
		\node at (16,-1.5) {\textcolor{red}{$V_4$}};
		\node at (-2.5,-2.75) {\large{(a)}};
		\node at (5.5,-2.75) {\large{(b)}};	
		\node at (13.5,-2.75) {\large{(c)}};
		%
		\node at (-2.5,3) {};
		\end{tikzpicture}
	}
	\caption{(a) A graph $G$ with a partition of its vertex set into four clusters $\mathcal{V} = \{V_1,\ldots,V_4\}$ shown in dashed ellipses, (b) an optimally colored selection $\{1,2,3,4\}$ in $G$, (c) an optimal selection $\{1,6,3,8\}$ in $G$ with an optimal coloring of it, yielding $\chi_{SEL}(G, \mathcal{V}) = 1$. 	\label{fig:selcol_example}}
	{}
\end{figure}
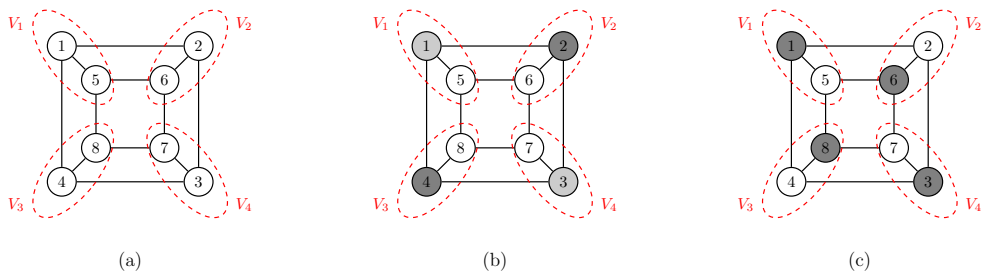

A graph $G$ is called \textit{perfect} if $\chi(G')=\omega(G')$ for every induced subgraph $G'$ of $G$. The \textit{Weak Perfect Graph Theorem} (WPGT) \citep{lovasz1972normal} states that a graph $G$ is perfect if and only if its complement $\bar{G}$ is perfect. The \textit{Strong Perfect Graph Theorem} (SPGT) \citep{chudnovsky2006strong} states that a graph $G$ is perfect if and only if neither $G$ nor $\bar{G}$ contains an odd cycle of length at least five as an induced subgraph.  	
Given a graph $ G = (V, E) $ and a partition $\mathcal{V} = \{V_1,...,V_P\}$ of its vertex set into $P$ clusters, a \textit{selection} is a subset $V'$ of $V$ that comprises exactly one vertex from each one of the clusters in the partition; that is, $V' \subseteq V$ such that $|V' \cap V_p| = 1$ for all $p \in \{1,\ldots, P\}$. 
A \textit{selective $k$-coloring} of $ G $ is defined by a selection $V'$ and a $k$-coloring of $G[V']$. The \textit{selective chromatic number} of a graph $ G $ with vertex partition $\mathcal{V}$, denoted by $\chi_{SEL}(G, \mathcal{V})$, is the smallest integer $k$ for which $G$ admits a selective $k$-coloring \citep{demange2015some}. 
An \textit{optimal selection} is a selection whose optimal coloring yields the selective chromatic number (see Figure \ref{fig:selcol_example}).


\section{Cutting Plane Algorithm for {\sc Sel-Col} in Perfect Graphs}
\label{sec:cuttingplane}

In this section, we start with presenting an integer programming (IP) formulation for {\sc Sel-Col}.
We then describe our cutting plane method that is based on our previous work \citep{seker2019decomposition}, by first providing the model that forms the foundation for it, and afterwards presenting two types of cuts with one being stronger than the other in perfect graphs and explaining the algorithmic procedure.

Suppose that we are given a graph $G=(V,E)$ with ${V = \{1, \ldots ,n\}}$ and a partition $\mathcal{V}$ of its vertex set into $\mathit{P}$ clusters  $V_1,...,V_P$. 
An IP formulation to solve {\sc Sel-Col} can be written as follows:
\begin{subequations}
\label{model:IP}
\begin{alignat}{5}
 &\hspace{-2cm}\textbf{Model 1: }& \qquad &&&\text{min}& \quad &\sum_{k=1}^{\mathit{P}} \ y_{k} \label{IPobj}\\[0.1cm]
 &&&&& \text{s.t.}&  &w_{ik} \ \leq \ y_k \qquad &&  \forall \ i \in V ,\ \ k \in  \{1,...,\mathit{P}\}  \label{IPcons1} \\
&&&&&& &w_{ik} \ + \ w_{jk} \ \leq \ 1 \qquad &&  \forall \ \{i,j\} \in E ,\ \ k \in  \{1,...,\mathit{P}\}  \label{IPcons2} \\
&&&&&& &\sum_{i \in V_p} \sum_{k=1}^\mathit{P} \ w_{ik} \ = \ 1 \qquad && \forall \ p \in \{1,...,\mathit{P}\} \label{IPcons3} \\
&&&&&& &y_k \in \{0,1\} \qquad &&\forall \ k \in  \{1,...,\mathit{P}\} \label{IPcons4} \\
&&&&&& &w_{ik} \in \{0,1\} \qquad && \forall \ i \in V , \ k \in \{1,...,\mathit{P}\} \label{IPcons5} ,
\end{alignat}
\end{subequations}

\noindent where $y_k$ is a binary variable that takes value 1 if color $k$ is used and 0 otherwise, and $w_{ik}$ is another binary variable taking value 1 if vertex $i$ is selected and gets color $k$ and 0 otherwise.

Model 1 takes the number of available colors as $P$, because the size of a selection is $P$ and in the worst case each vertex in the selection takes a distinct color. One should note that a feasible $c$-coloring of a selection can choose any size-$c$ subset of the available $P$ colors. Moreover, a feasible $c$-coloring of a selection has $c!$ equivalent alternatives in the solution space that are obtained by simply permuting the indices of the $c$ colors used. In order to reduce the inherent symmetry in this formulation, we add the constraint set (\ref{symcons}) to Model 1 (similar to the symmetry breaking constraints in \citep{sherali2001improving}). This way, the program is forced to use the colors in increasing order of their indices and clone solutions resulting from alternative combinations of the available $P$ colors are discarded from the solution space. 
\begin{alignat}{1}
y_k  \ \leq \ y_{k-1} \qquad & \forall \ k \in  \{2,...,\mathit{P}\} \label{symcons}
\end{alignat}

Model 1 contains $O(|V| \times P)$ binary variables and $O(|E| \times P)$ constraints. Since it is an integer programming formulation, its solution time and memory requirement may rise exponentially with the increase in the size of the input.

An alternative formulation for {\sc Sel-Col}, which constitutes the basis of our cutting plane algorithm, can be written as follows \citep{seker2019decomposition}: 
\begin{subequations}
\label{model:decomp}
\begin{alignat}{5}
&\hspace{-2cm}\textbf{Model 2: }& \qquad &&&\text{min}& \quad &t \label{obj} \\[0.1cm] 
&&&&& \text{s.t.} & &\sum_{i \in V_p} \ x_i \ = \ 1 \qquad && \forall \ p \in \{1,...,\mathit{P}\}  \label{cons1}\\
&&&&&& &t \geq \chi(G[x])  \label{cons2}\\
&&&&&& &t \geq 0  \label{cons3} \\
&&&&&& &x_i \in \{0,1\} \qquad && \forall \ i \in V \label{cons4},
\end{alignat}
\end{subequations}

\noindent where $x_i$ is a binary variable that is assigned value 1 if vertex $i$ is selected and 0 otherwise. 
$ G[x] $ denotes the graph induced by the selection defined by the variable vector $x=(x_1,\ldots,x_n)$, and the nonnegative variable $t$ is an estimate of the number of colors needed. 

The requirement that exactly one vertex is selected from each of the $P$ clusters is met by constraint set (\ref{cons1}). 
The nonnegative variable $t$ is forced to be at least equal to the chromatic number of the selection given by the variable vector $x$ through constraint set (\ref{cons2}). 
Since the objective is to minimize variable $t$, the optimal objective value of this model will be equal to the selective chromatic number $\chi_{SEL}(G,\mathcal{V})$ of the input graph.
However, enforcing $t$ to be equal to the selective chromatic number is not achieved with linear expressions in the current form of the model.  We need to replace (\ref{cons2}) with a set of linear inequalities that will perform the coloring task. Instead of embedding these inequalities to the model all at once, we are going to generate and incorporate them to the model as needed. In order to be able to do this, we decompose the problem into two parts, and deal with the selection task in one part and the coloring of the given selection in the other. 

We first construct our initial master problem by relaxing the constraint set (\ref{cons2}) and obtaining a linear model that merely yields a feasible vertex selection for $G$. At the beginning, there is no connection between variable $t$ and vector $x$. The link in between is established throughout the iterations. At each step, we start with solving the master problem to optimality and obtain a vertex selection. We then feed this selection to a subproblem where the chromatic number of the graph induced by the given selection is computed. If the chromatic number found by the subproblem is higher than the optimal objective value of the master problem, then it means that the current state of the master problem does not fully incorporate the set of constraints that can correctly estimate the selective chromatic number of the input graph. In this case, we add a constraint to the master problem, which ensures that the $t$-value takes a value at least as large as the chromatic number of the graph induced by the current selection, as long as the same set of vertices is selected. A diagram demonstrating the way our cutting plane algorithm works is provided in Figure \ref{fig:decomp_diagram}. 

\begin{figure}[!h]
  \centering
    \resizebox{0.65\textwidth}{!}{
    	\begin{tikzpicture}[rect/.style={rectangle,fill=white!80,inner sep=6pt, rounded corners, line width=1.2pt, minimum width = 10pt, minimum size = 30pt}]		
    	\hspace{-1cm}
    		\node [minimum width=0.15cm] at (-2.3,2.4) {\bf MP};
    		\node[rect, draw] (v1) at (0.5,2.4) {\parbox{4cm}{\centering { Select vertices \\ \vspace{0.15cm} Guess \# colors as $\boldsymbol{t^{(j)}}$}}};
    		\node [minimum width=0.15cm] at (12.2,2.4) {\bf SP};
    		\node[rect, draw] (v2) at (9.5,2.4) {\parbox{4cm}{\centering {Color selected vertices \\\vspace{0.15cm} Min \# colors used $\boldsymbol{z_{\text{sp}}^{(j)}}$}}};
    		\draw[-{Latex[length=3.5mm, width=2mm]}, bend left=19] (v1.north) to node[swap, anchor=center, below][yshift=0.45cm] {\parbox{2.5cm}{\centering { Master solution \\$\boldsymbol{t^{(j)}}$, $\boldsymbol{x^{(j)}}$}}}(v2.north); 
    		\draw[-{Latex[length=3.5mm, width=2mm]}, bend left=19] (v2.south) to node[swap, anchor=center,below][yshift=0.6cm] {\parbox{2.5cm}{\centering { If {$\boldsymbol{z_{\text{sp}}^{(j)}}$} $\boldsymbol{>}$ {$\boldsymbol{t^{(j)}}$}, \\ add cut to MP}}} (v1.south); 
    	\end{tikzpicture}
	}
    \caption{A diagram showing how the cutting plane algorithm operates}
    \label{fig:decomp_diagram}
\end{figure}
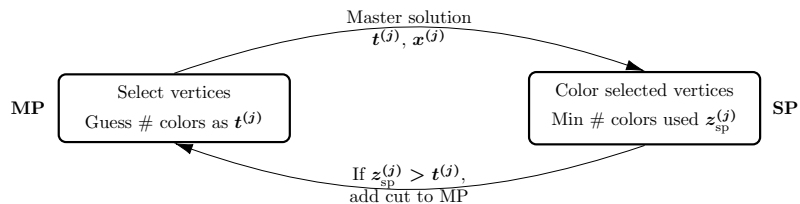

One cut that can be incorporated into our cutting plane framework can be expressed as follows:
 
\begin{alignat}{1}
	& t \ \geq \ \chi(G[x^{(j)}]) \ - \ \sum_{\{i \in V | x^{(j)}_i = 1\}} (1-x_i), \label{cut1}
\end{alignat}
where $G[x^{(j)}]$ denotes the graph induced by the selection found at iteration $j$ given by the variable vector $x^{(j)}$, and $\chi(G[x^{(j)}])$ the chromatic number of this induced subgraph.

\medskip
The inequality (\ref{cut1}) utilizes the fact that the chromatic number of a graph induced by a selection can decrease by at most one for each vertex change. Let us investigate the set of values that the right hand side of constraint (\ref{cut1}) can attain. Firstly, the rightmost summation in inequality (\ref{cut1}) becomes zero only if we choose the exact same set of vertices as in $x^{(j)}$, that is, when $x=x^{(j)}$ holds. Secondly, each time a vertex from the selection given by $x^{(j)}$ is switched to some other vertex, this term increases by one. Considering the term on the right hand side as a whole, the lower bound on $t$ is reduced by one for each vertex we change in the selection.

The solution framework introduced above is valid for any input graph; it does not utilize any particular property of the input graph. As mentioned before, the chromatic number of a perfect graph is equal to the size of a maximum clique in it and by definition, the property of being perfect is \textit{hereditary}, i.e., every induced subgraph of a perfect graph is again perfect. 
We can make use of this property and adapt our solution framework to the class of perfect graphs. Each time the subproblem is called, we can equivalently find a maximum clique in it, instead of its chromatic number. We can express this relationship as an inequality as shown in (\ref{cliquecut}), and utilize it within our cutting plane algorithm. 
\begin{align}
&\hspace{25pt} t \geq \sum_{i \in K^{(j)}} x_i  \label{cliquecut}
\end{align}
where $K^{(j)}$ is a maximum clique of $G[x^{(j)}]$.

Given a selection, the right hand side of (\ref{cliquecut}) is equal to the size of a maximum clique in it. So, for a fixed vertex selection, (\ref{cliquecut}) enforces the master program's objective value $t$ to be at least as large as the number of vertices in a maximum clique of it, as intended. Moreover, (\ref{cliquecut}) provides positive lower bounds for other unexplored selections that intersect with the cliques whose cuts have been added before. For each vertex selection that the master problem outputs, we will construct a constraint of type (\ref{cliquecut}) using the maximum clique found in the subproblem, and add it to the master problem. So, at each iteration, the master problem will contain constraints of type (\ref{cliquecut}) that have been generated so far, together with (\ref{cons1}), (\ref{cons3}) and (\ref{cons4}). We prefer constraint (\ref{cliquecut}) over (\ref{cut1}) because it is stronger for perfect graphs, which we show in the following.

\begin{prop} 
	Constraint (\ref{cliquecut}) is stronger than (\ref{cut1}) for perfect graphs. 
\end{prop}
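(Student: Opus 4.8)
The plan is to show that, for perfect graphs, inequality (\ref{cliquecut}) dominates inequality (\ref{cut1}) in the sense that any $(t,x)$ satisfying (\ref{cliquecut}) also satisfies (\ref{cut1}), while the converse fails. The key observation is that, since $G[x^{(j)}]$ is perfect, $\chi(G[x^{(j)}]) = \omega(G[x^{(j)}]) = |K^{(j)}|$, so the constant term appearing on the right-hand side of (\ref{cut1}) equals the size of the maximum clique $K^{(j)}$ used in (\ref{cliquecut}). Thus it suffices to compare the right-hand sides
\begin{alignat}{1}
 R_{\ref{cliquecut}}(x) \ &= \ \sum_{i \in K^{(j)}} x_i, \qquad
 R_{\ref{cut1}}(x) \ = \ |K^{(j)}| \ - \ \sum_{\{i \in V \mid x^{(j)}_i = 1\}} (1-x_i). \nonumber
\end{alignat}

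First I would rewrite $R_{\ref{cut1}}(x)$ by splitting the index set $\{i : x^{(j)}_i = 1\}$, which contains $K^{(j)}$ since $K^{(j)} \subseteq G[x^{(j)}]$: we get $R_{\ref{cut1}}(x) = |K^{(j)}| - \sum_{i \in K^{(j)}}(1-x_i) - \sum_{\{i : x^{(j)}_i=1,\ i \notin K^{(j)}\}}(1-x_i) = \sum_{i \in K^{(j)}} x_i - \sum_{\{i : x^{(j)}_i = 1,\ i \notin K^{(j)}\}}(1-x_i)$. Since each term $(1-x_i)$ in the remaining sum is nonnegative (as $x_i \in \{0,1\}$, or even just $x_i \le 1$ in an LP relaxation), we conclude $R_{\ref{cut1}}(x) \le \sum_{i \in K^{(j)}} x_i = R_{\ref{cliquecut}}(x)$ for every feasible $x$. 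Hence every $(t,x)$ with $t \ge R_{\ref{cliquecut}}(x)$ also has $t \ge R_{\ref{cut1}}(x)$, so (\ref{cliquecut}) implies (\ref{cut1}).

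To complete the argument that (\ref{cliquecut}) is \emph{strictly} stronger, I would exhibit a point $(t,x)$ admitted by (\ref{cut1}) but cut off by (\ref{cliquecut}): take any selection $x$ that differs from $x^{(j)}$ in exactly one cluster, choosing the changed vertex to lie outside $K^{(j)}$ but with all of $K^{(j)} \setminus \{v\}$ still selected for the vertex $v \in K^{(j)}$ that was dropped — or, more simply, note that the two right-hand sides already differ at such a point by the strictly positive slack term $\sum_{\{i : x^{(j)}_i=1,\ i \notin K^{(j)}\}}(1-x_i)$ whenever some selected non-clique vertex is deselected, so the feasible region of (\ref{cliquecut}) is properly contained in that of (\ref{cut1}). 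A concrete small instance (e.g. a path or the graph of Figure \ref{fig:selcol_example}) can be cited to make the strictness fully explicit.

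The main obstacle is essentially bookkeeping rather than any deep difficulty: one must be careful about the exact meaning of ``stronger'' (domination of right-hand sides over all feasible $x$, equivalently containment of the projections onto $x$-space of the regions defined by the two cuts together with the box constraints), and one must correctly use perfection precisely once — to identify $\chi(G[x^{(j)}])$ with $|K^{(j)}|$ — since without it the constant terms would not match and the comparison would not go through. I would also make sure to state explicitly that the domination holds even at fractional $x$ (so the cut is stronger as a linear inequality, not merely over integer points), which is what matters for a cutting plane method.
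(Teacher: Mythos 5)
Your first half---the containment of the clique-cut polyhedron in the polyhedron defined by all cuts of type (\ref{cut1})---is correct and is essentially the paper's own argument: perfectness is used exactly once to replace $\chi(G[x^{(j)}])$ by $|K^{(j)}|$, the clique part of the sum is absorbed to give $\sum_{i \in K^{(j)}} x_i$, and the nonnegative remainder over selected vertices outside $K^{(j)}$ is discarded. You are also right to insist that this domination holds at fractional $x$.

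The gap is in the strictness half. The witnesses you propose---integer selections differing from $x^{(j)}$ in one cluster---cannot work, and the inference that a difference in right-hand sides at some point yields proper containment of the feasible regions is a non sequitur, because the regions being compared include \emph{all} cuts of the given type (one per selection) together with the selection constraints. At any integer selection $x'$ the two families are in fact equally tight: the type-(\ref{cut1}) cut generated at $\hat{x}=x'$ itself already forces $t \geq \chi(G[x'])$, which by perfectness equals the bound given by the best clique cut at $x'$, while the cuts generated at other selections are dominated by this one at the point $x'$. In your swap example (a selected vertex outside $K^{(j)}$ exchanged for another vertex in its cluster), the slack term is indeed positive, but the candidate point $\bigl(|K^{(j)}|-1,\, x'\bigr)$ violates the (\ref{cut1}) cut indexed by $x'$, since $K^{(j)}$ survives into $G[x']$ and hence $\chi(G[x']) \geq |K^{(j)}|$; so that point is not admitted by the (\ref{cut1}) family either. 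Strictness can only be exhibited at a \emph{fractional} point, and this is exactly what the paper does: on the four-vertex edgeless graph with clusters $\{1,2\}$, $\{3\}$, $\{4\}$, the point $(\bar t, \bar x) = (0.5, 0.5, 0.5, 1, 1)$ satisfies the selection constraints and both type-(\ref{cut1}) cuts, yet violates the clique cut $t \geq x_3$. Your closing remark that ``a concrete small instance can be cited'' defers precisely the step that requires care, so as written the strictness claim remains unproven.
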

\begin{proof}
	\begin{NoHyper}
		Given a graph $G$ and a partition $\mathcal{V} = \{V_1,...,V_P\}$ of its vertex set, let us first define the two polyhedra $\mathcal{P}_{\ref{cut1}}$ and $\mathcal{P}_{\ref{cliquecut}}$ as follows:
		\begin{align}
		&\hspace{0pt} \mathcal{P}_{\ref{cut1}} \coloneqq \bigg\{x \in [0,1]^{|V|}, \ t \in \mathbb{R}_{\geq 0} \colon \sum_{i \in V_p} x_i = 1 \ \ \ \forall p \in \{1,...,\mathit{P}\}, \nonumber \\  
		&\hspace{50pt} t \geq \chi(G[\hat{x}]) - \sum_{\{i \in V | \hat{x}_i = 1\}} (1-x_i) \ \ \ \forall \hat{x} \in \{0,1\}^{|V|} \text{ s.t. }  \sum_{i \in V_p} \hat{x}_i = 1 \nonumber \\
		&\hspace{50pt} \forall p \in \{1,...,\mathit{P}\} \bigg\} \nonumber \\
		&\hspace{0pt} \mathcal{P}_{\ref{cliquecut}} \coloneqq \bigg\{x \in [0,1]^{|V|}, \ t \in \mathbb{R}_{\geq 0} \colon  \sum_{i \in V_p} x_i = 1 \ \ \ \forall p \in \{1,...,\mathit{P}\}, \nonumber \\ 
		&\hspace{50pt} t \geq \sum_{i \in \hat{K}} x_i \ \ \ \forall \hat{x} \in \{0,1\}^{|V|} \text{ s.t. } \sum_{i \in V_p} \hat{x}_i = 1 \ \ \ \forall p \in \{1,...,\mathit{P}\} \text{ and } \nonumber \\
		&\hspace{50pt} \hat{K} \text{ is a maximum clique of } G[\hat{x}] \bigg\} \nonumber
		\end{align}
		In other words, if we let $\mathcal{P}$ be the linear programming (LP) relaxation of the feasible solution space defined by the constraint set of our initial master problem, $\mathcal{P}_{\ref{cut1}}$ and $\mathcal{P}_{\ref{cliquecut}}$ are constructed by further constraining $\mathcal{P}$ respectively with constraints (\ref{cut1}) and (\ref{cliquecut}) defined for each one of all possible vertex selections. We want to prove that  $\mathcal{P}_{\ref{cliquecut}} \subseteq \mathcal{P}_{\ref{cut1}} $. To do this, we first show that for any  $\{\bar{t}, \bar{x}\} \in \mathcal{P}_{\ref{cliquecut}}$, $\{\bar{t}, \bar{x}\} \in \mathcal{P}_{\ref{cut1}} $ holds. Since vertex selection constraints are common on both $\mathcal{P}_{\ref{cut1}}$ and $\mathcal{P}_{\ref{cliquecut}}$, $\sum_{i \in V_p} \bar{x}_i = 1 \ \ \forall p \in \{1,...,\mathit{P}\}$ holds by construction. Now, let $\hat{x} \in \{0,1\}^{|V|}$ such that $\sum_{i \in V_p} \hat{x}_i = 1 \ \ \forall p \in \{1,...,\mathit{P}\}$ and $\hat{K}$ is a maximum clique of $G[\hat{x}]$. Note that we can write $\sum_{i \in \hat{K}} \bar{x}_i = |\hat{K}| - \sum_{i \in \hat{K}}(1-\bar{x}_i)$. Since $ \hat{K} \subseteq V(G[\hat{x}])$, we have $\sum_{\{i \in V | \hat{x}_i = 1\}} (1-\bar{x}_i) \geq \sum_{i \in \hat{K}}(1-\bar{x}_i) \geq 0$. As $\chi(G[\hat{x}]) = |\hat{K}|$ by the perfectness of $G$, we have $\bar{t} \geq \sum_{i \in \hat{K}} \bar{x}_i = |\hat{K}| - \sum_{i \in \hat{K}}(1-\bar{x}_i) \geq \chi(G[\hat{x}]) - \sum_{\{i \in V | \hat{x}_i = 1\}} (1-\bar{x}_i)$ and hence $\{\bar{t}, \bar{x}\} \in \mathcal{P}_{\ref{cut1}} $.
		Next, we show that this containment can be strict; i.e., there exists a perfect graph $G$ for which at least one point in $\mathcal{P}_{\ref{cut1}}$ is not contained in $\mathcal{P}_{\ref{cliquecut}}$. To this end, consider the graph $ G = (V, E)$ with $\mathcal{V}=\{V_1, V_2, V_3\}$, where $V = \{1, 2, 3, 4\}$, $E = \emptyset$, $V_1=\{1,2\}$, $V_2=\{3\}$, and $V_3=\{4\}$. There are two possible selections for this graph, which are $\hat{x}^{(1)}=(1,0,1,1)$ and $\hat{x}^{(2)}=(0,1,1,1)$. The constraints of type (\eqref{cut1}) associated with selections $\hat{x}^{(1)}$ and $\hat{x}^{(2)}$ are respectively $c_1: t \geq 1-(3-(x_1+x_3+x_4))$ and $c_2: t \geq 1-(3-(x_2+x_3+x_4))$. Now, take the point $(\bar{t}, \bar{x}_1, \dots, \bar{x}_4) = (0.5,0.5,0.5,1,1)$. This point is contained in $\mathcal{P}_{\ref{cut1}}$, because it satisfies the selection constraints as well as $c_1$ and $c_2$. A maximum clique of $G[\hat{x}^{(1)}]$ is $\{3\}$. The corresponding constraint of type (\ref{cliquecut}), $t \geq x_3$ is violated by the given point, as $\bar{t}=0.5$ and $\bar{x}_3=1$.  Hence, $(0.5,0.5,0.5,1,1) \notin \mathcal{P}_{\ref{cliquecut}}$, and $\mathcal{P}_{\ref{cliquecut}} \subset \mathcal{P}_{\ref{cut1}}$. 
	\end{NoHyper}	
\end{proof}

\begin{figure}[!h]
\centering
	\framebox[6.4in]{\begin{minipage}[t]{6.3in}
		\begin{algorithmic}
			\STATE
			\STATE \textbf{Input:} A perfect graph $G=(V,E)$, and a partition $\mathcal{V}$ of $V$ 
			\STATE \textbf{Output:} An optimal selection $x^*$ with $\chi_{SEL}(G, \mathcal{V})=z^*$
			\medskip				
			\STATE $j \gets 0$, $t^{(j)} \gets 0$, $z_{\text{sp}}^{(j)} \gets \infty$ \\ %
			\WHILE{true} 
			\STATE $j \gets j+1$
			\STATE Solve the master problem optimally, find an optimal selection $x^{(j)}$ with optimal objective value $t^{(j)}$
			\STATE Find a maximum clique $K^{(j)}$ of $G[x^{(j)}]$ in the subproblem
			\STATE $z_{\text{sp}}^{(j)} \gets |K^{(j)}|$ 
			\IF{$z_{\text{sp}}^{(j)} > t^{(j)}$} 
			\STATE Add (\ref{cliquecut}) to the master problem
			\ELSE
			\STATE break
			\ENDIF
			\ENDWHILE
			\STATE $x^* \gets x^{(j)}$, $z^* \gets t^{(j)}$
			\RETURN $x^*, z^*$
			\STATE
		\end{algorithmic}
	\end{minipage}}
	\caption{Cutting Plane Algorithm for Perfect Graphs \label{alg:decompositionpg}}
	{}
\end{figure}
Pseudo-code of our cutting plane algorithm for perfect graphs is provided in Figure \ref{alg:decompositionpg}. At each step $j$ of our cutting plane algorithm, the master problem is solved to optimality yielding a selection $x^{(j)}$ with a corresponding optimal objective value $t^{(j)}$, and $G[x^{(j)}]$ is fed to the subproblem. If the objective value of the subproblem, which is the size of a maximum clique $K^{(j)}$ in $G[x^{(j)}]$, turns out to be greater than $t^{(j)}$, we continue iterating because it means the master problem is currently lacking the constraints that will lead to the correct estimate of the optimal value of $t$. Otherwise, the process is terminated, in which case the incumbent solution $x^*$ and the associated objective value $t^*$ are optimal.

We note that in our computational experiments, we implemented the algorithm in Figure \ref{alg:decompositionpg} using the callback mechanism of the solver, as discussed in Section \ref{section:compstudy}.
We also note that when our cutting plane algorithm terminates with an optimum solution, it does not deliver the colors of the vertices in the optimal selection; we only know the size of a maximum clique and the vertices in it.   
In order to find a minimum coloring of a given optimal selection, we can build an IP model and solve it using the graph induced by the optimal selection as input. 
By allowing only as many colors as the maximum clique size, we can significantly reduce the search space, so that any feasible solution to the minimum coloring problem becomes optimal, too.
To facilitate the solution procedure further, we can incorporate additional constraints that fix the colors of the vertices in the maximum clique to distinct ones.


\section{Methods for the Maximum Clique Problem}
\label{sec:submethods}

In this section, we discuss the two methods that we employ in the subproblem of our cutting plane procedure to find maximum cliques.
First, we discuss the details of an approach by \cite{grotschel1984polynomial} that uses semidefinite programming, and then review a more promising combinatorial method by \cite{tomita2010simple}.

\subsection{Solving the Maximum Clique Problem in Perfect Graphs via Semidefinite Programming}

In the class of perfect graphs, the maximum clique problem is polynomial-time solvable via semidefinite programming (SDP) \citep{grotschel1981ellipsoid, grotschel1984polynomial}.
Finding the clique number of a perfect graph necessitates solving an SDP model only once. However, extracting a maximum clique involves solving a series of SDP models on successively smaller graphs for at most $n$ times, where $ n $ is the number of vertices in the input graph. 

In his seminal paper, \cite{lovasz1979shannon} introduced the so-called \textit{theta function} of a graph, also known as \textit{Lov{\'a}sz's theta function}, which is denoted by $\vartheta(G)$ for a given graph $G$, and satisfies 
\begin{alignat}{1}
	& \alpha(G) \ \leq \ \vartheta(G) \ \leq \ \chi(\bar{G}), \nonumber
\end{alignat}
\noindent where $\alpha(G)$ denotes the size of a maximum stable set in $G$ and $\chi(\bar{G})$ denotes the chromatic number of the complement of $G$.
For any graph $G$, the stability number $\alpha(G)$ equals the clique number of its complement $\omega(\bar{G})$, and $\chi(\bar{G})$ is equal to $\omega(\bar{G})$ for perfect graphs. Then, $\vartheta(G) = \omega(\bar{G})$ holds for perfect graphs. In order to find $\omega(G)$ of a perfect graph $G$, we need to use the complement $\bar{G}$ of it, which is again perfect by the WPGT \citep{lovasz1972normal}. The theta function $\vartheta(G)$ can be computed by several equivalent formulations \citep{knuth1994sandwich, grotschel1988geometric}, as noted in \citep{yildirim2006extracting}. We provide one of these formulations in (\ref{SDP_obj})--(\ref{SDP_cons4}), which is an SDP due to \cite{lovasz1979shannon}.

Let us introduce a few notations first. For two matrices $A \in \mathbb{R}^{n \times n} $ and $ B \in \mathbb{R}^{n \times n} $, the trace inner product is denoted by $A \bullet B = \text{trace}(A^TB)
= \text{trace}(BA^T ) = \sum_{i,j} A_{ij}B_{ij}$. A symmetric real matrix $A$ is said to be \textit{positive semidefinite} if $zA^T z \geq 0$ for every $z \in \mathbb{R}^n$. For an $n \times n$ real symmetric matrix $A$, we use $A \succeq 0 $ to indicate that $A$ is positive semidefinite. Finally, we use $ \mathcal{S}^{n \times n} $ to denote the space of $n \times n$ symmetric matrices. 

Consider the following formulation:
\begin{subequations}
\begin{alignat}{3}
	& \text{max} \qquad &&J \bullet X  \label{SDP_obj}\\ 
	&\text{s.t.} &&I \bullet X \ = \ 1 \qquad && \label{SDP_cons1}\\
	&&& X_{ij} = 0 \qquad && \forall \ \{i,j\}\in E \label{SDP_cons2} \\
	&&& X \ \succeq \ 0 \label{SDP_cons3} \\
	&&& X \in \mathcal{S}^{n \times n} \label{SDP_cons4},
\end{alignat}
\end{subequations}

\noindent where $I$ is the identity matrix, $J$ is a matrix of all ones, and $E$ is the edge set of the input graph. 

The SDP model provided in (\ref{SDP_obj})--(\ref{SDP_cons4}) in general gives an upper bound $\vartheta(G)$ on the stability number $\alpha(G)$ of a graph $G$ \citep{lovasz1979shannon}.  
To see this, let us first consider an IP formulation for the maximum stable set problem for a graph $G=(V,E)$ on $n$ vertices, which is $\max_{x \in \{0,1\}^n} \{e^Tx \colon x_i + x_j \leq 1 \ \ \forall \{i,j\} \in E\}$, and let $\hat{x} \in \{0,1\}^n$ be a feasible solution to it. Let us define $\hat{X} = \hat{x}\hat{x}^T / e^T\hat{x}$, where $e$ is a vector of ones. $\hat{X}$ is positive semidefinite by definition, and this shows that constraints (\ref{SDP_cons3}) and (\ref{SDP_cons4}) are satisfied. Since $\hat{x} \in \{0,1\}^n$, each row of $\hat{x}\hat{x}^T$ is either $\hat{x}^T$ or is comprised of zeroes. As $e^T\hat{x}$ is simply the sum of entries of $\hat{x}$, we have $\hat{X} \in [0,1]^{\{n \times n\}}$. Moreover, $ diag(\hat{X}) = \hat{x} / e^T\hat{x}$, where $diag(\hat{X})$ denotes the vector formed by the diagonal entries of $\hat{X}$. This implies that $\text{trace}(\hat{X}) = 1$, which means that $\hat{X}$ satisfies constraint (\ref{SDP_cons1}). Furthermore, for all $\{i,j\} \in E$, we have $\hat{X}_{ij} = 0$ by definition, which shows that constraint (\ref{SDP_cons2}) is satisfied, too. So, $\hat{X}$ is a feasible solution to the SDP model \citep{galli2017lovasz}. Finally, we have that $J \bullet \hat{X} = \frac{\sum_{i,j} \hat{X}_{ij}}{e^T\hat{x}} = \frac{({e^T\hat{x}})^2}{e^T\hat{x}} = {e^T\hat{x}} $, which means that the objective function simply counts the number of vertices contained in $\hat{x}$. Hence, we conclude that for any graph $G=(V,E)$, any feasible solution to the above IP formulation for the maximum stable set problem can be converted to a feasible solution for the SDP model in (\ref{SDP_obj})--(\ref{SDP_cons4}), which has a corresponding objective value equal to the stability number $\alpha(G)$.  Since there may be other feasible solutions with a better objective value, the optimum objective value of this SDP model provides an upper bound on $\alpha(G)$ in general.  

SDP models can be solved in polynomial time up to any given accuracy \citep{grotschel1981ellipsoid, alizadeh1991interior, nesterov1994interior}.
When the input graph $G$ is perfect, the optimal objective value of the SDP model provided in (\ref{SDP_obj})--(\ref{SDP_cons4}) gives the stability number $\alpha(G)$ \citep{grotschel1984polynomial}. However, we cannot directly obtain a maximum stable set itself by solving this model once. \cite{grotschel1984polynomial} propose a method to extract a maximum stable set in perfect graphs by repeatedly computing the stability number in smaller induced subgraphs of the input graph. The main idea of this method is to remove vertices from the input graph until only the vertices of one maximum stable set remains. It works as follows: First, we find the stability number $\alpha(G)$ of the original input perfect graph $G=(V,E)$. Then, we mark all vertices of $G$ unlabeled. At each step, we select an unlabeled vertex $v \in V(G)$ and tentatively remove it from $G$. Note that $G'= G\setminus \{v\}$ is an induced subgraph of $G$, and hence is perfect, too. We then calculate $\alpha(G')$. If $\alpha(G') = \alpha(G)$, we set $G=G'$, because it means that $v$ is not contained in all maximum stable sets of $G$ and its removal will leave at least one maximum stable set intact. If  $\alpha(G') < \alpha(G)$, then it means that $v$ intersects with all of the maximum stable sets in the current graph and cannot be eliminated. Therefore, we label $v$ in this case and keep it in our vertex set. This process continues until there is no unlabeled vertex, in which case the set of all remaining (labeled) vertices form a maximum stable set of the original graph. Since we either label or remove a vertex at each step, each vertex is considered once in this method. It outputs a maximum stable set after $n$ iterations, with $n$ being the number of vertices of the original input graph. It is also possible to find other maximum stable sets, if any, by changing the order of vertices to be considered. 

This method is the first polynomial-time algorithm to find a maximum stable set in perfect graphs. Since we are interested in finding a maximum clique, which corresponds to a maximum stable set in the complement of the graph, we simply give the complement of the original graph as input, which is also a perfect graph by WPGT. At each step of this method, we make use of the SDP model provided in (\ref{SDP_obj})--(\ref{SDP_cons4}) to find the stability number. Note that the input to this method will be a subgraph of a perfect graph induced by a vertex selection. By definition, induced subgraphs of a perfect graph are again perfect. Therefore, we can safely employ this method in the subroutine of our cutting plane procedure.

We made a minor modification to \cite{grotschel1984polynomial}'s algorithm in order to possibly avoid unnecessary computations. As we compute the size of a maximum stable set at the beginning of the algorithm, we continue iterating until the number of labeled vertices in the graph equals maximum stable set size, instead of waiting for all vertices to be considered.   

Although SDP models are polynomial-time solvable (up to any fixed accuracy) in theory, their practical performance is typically not satisfactory, as will be revealed by the results of our computational experiments presented in Section \ref{section:compstudy}. 
We note that \cite{yildirim2006extracting} propose another algorithm to solve the maximum stable set problem in perfect graphs, which makes use of solutions to SDP models throughout.
Although it shows better performance than the one by \cite{grotschel1984polynomial} in several test instances, we decided not to test it within our cutting plane procedure, because the improvement that their algorithm achieves is far from being comparable to what we achieve by using the combinatorial method by \cite{tomita2010simple}, which we discuss in the sequel.

\subsection{A Branch-and-Bound Algorithm for the Maximum Clique Problem}

A comprehensive review on both exact and heuristic algorithms for maximum clique problem by \cite{wu2015review} provides computational performance comparison of ten state-of-the-art exact algorithms on a set of popular DIMACS instances. One of the best-performing algorithms is that of  \cite{tomita2010simple}, which is a branch-and-bound algorithm that the authors call \textit{MCS}. MCS is based on a previous maximum clique algorithm \textit{MCR} by \cite{tomita2007efficient} and shows considerably improved performance compared to the previous with the help of newly introduced techniques that reduce the search space.

MCR \citep{tomita2007efficient} is a branch-and-bound algorithm that begins with a small clique and continues searching for larger and larger cliques until it finds one that can be confirmed to be of maximum size. At every step, it starts from a single vertex and tries to expand it by adding new vertices. In order to avoid unnecessary searching, the algorithm makes use of a greedy coloring of the set $R$ of common neighbors of vertices in the current clique $Q$. Greedy coloring assigns a minimum possible (integer) label to each vertex in $R$, which simply implies that the size of a maximum clique in $R$, $\omega(R)$, can be at most the maximum label used in greedy coloring. Then, current clique $Q$ can be extended by at most $\omega(R)$ vertices. So, if the sum of $|Q|$ and the maximum label from greedy coloring does not exceed the size of a clique of maximum size found so far, $|Q^*|$, then there is no need to continue searching for vertices to be included in $Q$ because it is simply not possible to obtain a larger clique on that branch. 

In the improved maximum clique algorithm MCS \citep{tomita2010simple}, which we utilize in our cutting plane procedure, the authors focus on reducing the search space further by incorporating a recoloring routine. This routine aims to improve the coloring obtained from the greedy coloring procedure by recoloring vertices with the largest color label into a smaller one. 
One should note that MCS is not tailored for perfect graphs; it works on any graph.


\section{Test Bed: A Perfect Graph Generator}
\label{section:pggen}

In order to test the performance of our solution approach, we need random problem instances. A complete problem instance for {\sc Sel-Col} consists of a graph $G=(V,E)$ and a partition $\mathcal{V}$ of its vertex set $V$. In this section, we first introduce an algorithm to randomly generate perfect graphs and then briefly describe a method to produce random vertex set partitions.

To the best of our knowledge, there have been only theoretical studies on the generation of perfect graphs in its general form. In his survey, \cite{chvatal1984notes} raises the question of whether all perfect graphs are constructible from some ``primitive'' perfect graphs using perfection-preserving operations, while exemplifying some classes all elements of which can be set up through this idea. To date, only some partial answers have been given to this question. For instance, \cite{burlet1984polynomial} have proven that all Meyniel graphs are constructible from certain primitive Meyniel graphs by an operation called \textit{amalgam}. Another study by \cite{chudnovsky2013structure} describes the structure of all bull-free perfect graphs, where \textit{bull} is a graph consisting of a triangle and two vertex-disjoint pendant edges. They show that every bull-free perfect graph either belongs to a basic class, or it can be built from smaller bull-free perfect graphs by an operation that preserves the property of being bull-free and perfect. 

The question of whether all perfect graphs can be built from some primitive perfect graphs still remains to be answered, but there are operations proven to preserve perfection that can seemingly serve well to the purpose of generating perfect graphs. Our algorithm, which we call Algorithm PerfectGen, is based on this idea. We note that Algorithm PerfectGen does not guarantee that every perfect graph can be generated with strictly positive probability or that they are produced uniformly at random. We take a diverse set of small-sized perfect graphs and reach an end-graph by combining randomly selected ones via perfection-preserving operations. 
For this purpose, we made use of the set of perfect graphs up to nine vertices, offered by \cite{mckay}. We filtered out the ones that are not connected, and used the remaining collection to build larger perfect graphs.

Algorithm PerfectGen works as follows: We input a desired number of vertices $n$ and a desired edge density $\rho$ to the algorithm. Initially, we randomly choose a perfect graph from collection $ \mathcal{P} $, which is to be extended into a final perfect graph on $n$ vertices. Then, at each step, we first pick a random perfection-preserving operation $\mathit{op}$ among the six such operations we selected from the literature, whose details we are going to provide in the sequel. If the selected operation $op$ necessitates a perfect graph other than the current perfect graph $G$ that is being extended, then we randomly pick a graph $G'$ from $ \mathcal{P} $ and combine $G$ and $G'$ via operation $\mathit{op}$. Otherwise, we simply apply operation $\mathit{op}$ to $G$. This routine continues until $G$ has $n$ vertices in total.

\begin{figure}[!h]
\centering
	\framebox[6.0in]
	{\begin{minipage}[t]{5.9in}
			\begin{algorithmic}
				\STATE
				\STATE \textbf{Input:} An integer $n$, two real numbers $\rho$ and $\epsilon$ between 0 and 1 
				\STATE \textbf{Output:} A perfect graph $G$ on $n$ vertices with (approximate) edge density $\rho$
				\medskip
				\STATE $d \gets 0$		
				\WHILE { $|d - \rho| < \epsilon $ }			
				\STATE Let $G=(V,E)$ be a graph selected randomly from the collection of small-sized perfect graphs $ \mathcal{P}$ such that $|V| \leq n$
				\WHILE{ $|V| < n$ } 
				\STATE Select a perfection-preserving operation $ \mathit{op} $ randomly
				\IF{$ \mathit{op} $ requires another input graph}
				\STATE Select a random graph $G'=(V',E')$ from $ \mathcal{P} $ with $|V'| \leq n-|V|$
				\STATE Attach $G'$ to $G$ via operation $ \mathit{op} $
				\ELSE
				\STATE Modify $G$ with operation $ \mathit{op}  $
				\ENDIF
				\ENDWHILE
				\STATE $ m \gets |E| $, $d \gets \frac{m}{\frac{n(n-1)}{2}}$
				\IF{$ \rho - \epsilon < 1-d < \rho + \epsilon$}
				\STATE $G \gets \bar{G}$, where $\bar{G}$ is the complement of $G$, $d \gets 1-d$
				\ENDIF
				\ENDWHILE
				\RETURN $G=(V,E)$
				\STATE
			\end{algorithmic}
	\end{minipage}}
	\caption{Algorithm PerfectGen \label{alg:pggen}}
	{}
\end{figure}

The first part of the algorithm explained above has no mechanism to control the number of edges in $G$. In fact, we cannot directly control the number of edges, because the change in the number of edges as well as the number of vertices cannot be foreseen before starting to apply the operation. Moreover, the change in the number of edges is not monotonic throughout the iterations in general; i.e., it can increase, decrease (only possible if we take the complement of the graph), or stay the same. Thus, we first build a perfect graph $G$ on $n$ vertices and then check its edge density $d$. If $d$ is within some predetermined $\epsilon$-distance from the desired edge density $\rho$, then we accept $G$ and terminate the algorithm. On the other hand, if we can achieve the desired density by taking the complement of $G$, then we deliver $\bar{G}$ as the output graph. Otherwise, we simply discard $G$ and start to construct a new perfect graph from scratch. When generating our instances, we set the value of $\epsilon$ as 0.025. Pseudo-code of the algorithm is provided in Figure \ref{alg:pggen}. 

We now present the set of six perfection-preserving operations that we have used in Algorithm PerfectGen.

\begin{itemize}
	\item \textit{Clique identification} \citep{berge1973graphs}: \\
	Let  $G_1$, $G_2$ be disjoint graphs, and $K_i$ be a nonempty clique in $G_i$ satisfying $|K_1|=|K_2|$. Define a one-to-one correspondence between vertices of $K_1$ and $K_2$; i.e., choose a bijective map $f:K_1 \rightarrow K_2$. A graph obtained by unifying each vertex $v$ in $K_1$ with vertex $f(v)$ in $K_2$ is said to arise from $G_1$ and $G_2$ by clique identification. A graph $G$ obtained from two perfect graphs via clique identification is perfect.
	
	In Algorithm PerfectGen, we randomly select one vertex from $ G $ and one vertex from $ G' $, and extend each one to a maximal clique, say $ K_1 $ and $ K_2 $. Without loss of generality, say $ |K_1| \leq |K_2| $. We randomly choose $ |K_1| $ vertices from $ K_2 $ and identify them with those in $ K_1 $. The bijection $ f $ to identify those vertices is randomly determined.  
	
	\begin{figure}[H]
		\centering
		\scalebox{0.6}[0.6]{
			\begin{tikzpicture}[main_node/.style={circle,fill=black!70,draw,inner sep=0pt, minimum size=11pt]}]			
			\node[main_node] (v1) at (-4.25,0) {};
			\node[main_node] (v2) at (-3,0) {};
			\node[main_node] (v5) at (-2,1) {};
			\node[main_node] (v3) at (-1,0) {};
			\node[main_node] (v4) at (0.25,0) {};
			\node[main_node] (v8) at (2.5,0) {};
			\node[main_node] (v7) at (3.5,1) {};
			\node[main_node] (v10) at (4.5,0) {};
			\node[main_node] (v9) at (3.5,-1) {};
			\node[main_node] (v6) at (3.5,2) {};
			\node[main_node] (v11) at (8.75,0) {};
			\node[main_node] (v12) at (10,0) {};
			\node[main_node] (v15) at (11,1) {};
			\node[main_node] (v13) at (12,0) {};
			\node[main_node] (v14) at (13.25,0) {};
			\node[main_node] (v17) at (11,-1) {};
			\node[main_node] (v16) at (11,2) {};
			\draw (v1) -- (v2) -- (v3) -- (v4);
			\draw (v2) -- (v5) -- (v3);
			\draw (v6) -- (v7) -- (v8) -- (v9) -- (v10) -- (v7);
			\draw (v8) -- (v10);
			\draw (v11) -- (v12) -- (v13) -- (v14);
			\draw (v13) -- (v15) -- (v16);
			\draw (v15) -- (v12) -- (v17) -- (v13);
			\draw[dashed, draw=red]  (-2,0.3) ellipse (1.5 and 0.95);
			\draw[dashed, draw=red]  (3.5,0.3) ellipse (1.5 and 0.95);
			\node at (-3.5,1) {\textcolor{red}{$K_1$}};
			\node at (2,1) {\textcolor{red}{$K_2$}};
			\node at (-2,-2) {$G_1$};
			\node at (3.5,-2) {$G_2$};
			\node (v18) at (6.25,0) {};
			\node (v19) at (7.25,0) {};
			\draw[bend right,->,line width=1.1pt]  (v18) -- (v19);
			\node at (11,-2) {$G$};
			\end{tikzpicture}
		}
	\caption{Two perfect graphs combined by clique identification operation \label{fig:cliqueident_example}}
	{}
	\end{figure}
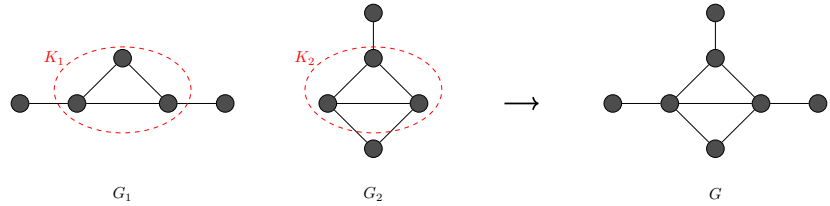
\vspace{-2mm}
	\item \textit{Substitution} \citep{lovasz1972normal}: \\
	Let $G_1$, $G_2$ be disjoint graphs, $v$ be a vertex of $G_1$, and $N$ the set of all neighbors of $v$ in $G_1$. Removing $v$ from $G_1$ and linking each vertex in $G_2$ to those in $N$ results in a graph that arises from $G_1$ and $G_2$ by substitution. If $G_1$ and $G_2$ are perfect, a graph $G$ derived via substitution of the two is perfect too. We note that this operation is also known as \textit{Replication Lemma} in the literature and it played an important role in the proof of the WPGT \citep{lovasz1972normal}. 
	
	Algorithm PerfectGen randomly picks a vertex $v$ from $ G $, and  then substitutes $ v $ with $ G' $ as explained above. Here, $ G $ of Algorithm PerfectGen corresponds to $ G_1 $ above, and similarly $ G' $ corresponds to $ G_2 $. 
	 	
	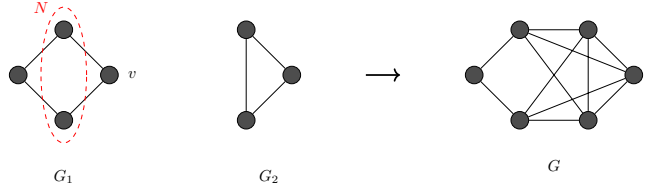
\begin{figure}[H]
		\centering
		\scalebox{0.6}[0.6]{
			\begin{tikzpicture}[main_node/.style={circle,fill=black!70,draw,inner sep=0pt, minimum size=11pt]}]
			\node[main_node] (v1) at (-3.5,0.5) {};
			\node[main_node] (v2) at (-2.5,1.5) {};
			\node[main_node] (v3) at (-1.5,0.5) {};
			\node[main_node] (v4) at (-2.5,-0.5) {};
			\node[main_node] (v5) at (1.5,1.5) {};
			\node[main_node] (v7) at (1.5,-0.5) {};
			\node[main_node] (v6) at (2.5,0.5) {};
			\node[main_node] (v8) at (6.5,0.5) {};
			\node[main_node] (v9) at (7.5,1.5) {};
			\node[main_node] (v10) at (9,1.5) {};
			\node[main_node] (v13) at (7.5,-0.5) {};
			\node[main_node] (v12) at (9,-0.5) {};
			\node[main_node] (v11) at (10,0.5) {};
			\draw (v1) -- (v2) -- (v3) -- (v4) -- (v1);
			\draw (v5) -- (v6) -- (v7) -- (v5);
			\draw (v8) -- (v9) -- (v10) -- (v11) -- (v12) -- (v13) -- (v8);
			\draw (v9) -- (v11) -- (v13) -- (v10) -- (v12) -- (v9);			
			\node (v14) at (4,0.5) {};
			\node (v15) at (5,0.5) {};				
			\draw[bend right,->,line width=1.1pt] (v14) -- (v15);
			\node at (-1,0.5) {$v$};
			\draw[dashed, draw=red] (-2.5,0.5) ellipse (0.5 and 1.5);
			\node at (-2.5,-1.75) {$G_1$};
			\node at (2,-1.75) {$G_2$};
			\node at (-3,2) {\textcolor{red}{$N$}};
			\node at (8.25,-1.5) {$G$};
			\end{tikzpicture}
		}
		\caption{Two perfect graphs combined by substitution operation \label{fig:substitution_example}}
		{}
	\end{figure}	
	\item \textit{``Composition''} \citep{bixby1984composition,william1980combinatorial}: \\
	Let $G_1$, $G_2$ be disjoint graphs each with at least three vertices, $v_i$ be a vertex of $G_i$, $N(v_i)$ the set of all neighbors of $v_i$. The composition of $G_1$ and $G_2$ is obtained from $G_1 \setminus  \{v_1\}$ and $G_2 \setminus \{v_2\}$ by connecting all vertices in $N(v_1)$ to those in $N(v_2)$. A graph obtained from two perfect graphs via composition operation is again perfect. 
	
	In Algorithm PerfectGen, we randomly pick a vertex $v$ from $ G $ and a vertex $ v' $ from $ G' $, and apply the operation as explained above.  
	
	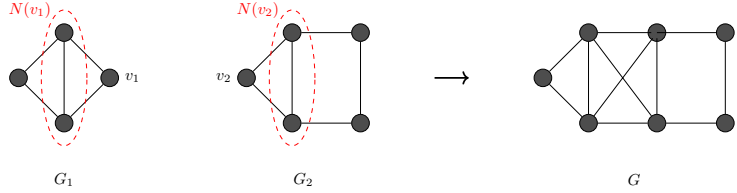
\begin{figure}[H]
		\centering
		\scalebox{0.6}[0.6]{
			\begin{tikzpicture}[main_node/.style={circle,fill=black!70,draw,inner sep=0pt, minimum size=11pt]}]		
			\node[main_node] (v1) at (-1.5,1.5) {};
			\node[main_node] (v2) at (-0.5,0.5) {};
			\node[main_node] (v3) at (-1.5,-0.5) {};
			\node[main_node] (v4) at (-2.5,0.5) {};
			\node[main_node] (v6) at (2.5,0.5) {};
			\node[main_node] (v7) at (3.5,1.5) {};
			\node[main_node] (v5) at (3.5,-0.5) {};
			\node[main_node] (v8) at (5,1.5) {};
			\node[main_node] (v9) at (5,-0.5) {};
			\node[main_node] (v10) at (9,0.5) {};
			\node[main_node] (v11) at (10,1.5) {};
			\node[main_node] (v12) at (10,-0.5) {};
			\node[main_node] at (11.5,1.5) {};
			\node[main_node] (v15) at (11.5,-0.5) {};
			\node[main_node] (v13) at (13,1.5) {};
			\node[main_node] (v14) at (13,-0.5) {};
			\draw (v1) -- (v2) -- (v3) -- (v4) -- (v1) -- (v3);
			\draw (v5) -- (v6) -- (v7) -- (v8) -- (v9) -- (v5) -- (v7);
			\draw (v10) -- (v11) -- (v12) -- (v10);
			\draw (11.5,1.5) node (v16) {} -- (v13) -- (v14) -- (v15) -- (v16) -- (v11) -- (v15) -- (v12) -- (v16);
			\node at (0,0.5) {$v_1$};
			\node at (2,0.5) {$v_2$};
			\node (v17) at (6.5,0.5) {};
			\node (v18) at (7.5,0.5) {};
			\draw[bend right,->,line width=1.1pt] (v17) -- (v18);
			\draw[dashed, draw=red] (-1.5,0.5) ellipse (0.5 and 1.5);
			\draw[dashed, draw=red]  (3.5,0.5) ellipse (0.5 and 1.5);
			\node at (-2.25,2) {\textcolor{red}{$N(v_1)$}};
			\node at (2.75,2) {\textcolor{red}{$N(v_2)$}};
			\node at (-1.5,-1.75) {$G_1$};
			\node at (3.75,-1.75) {$G_2$};
			\node at (11,-1.75) {$G$};
			\end{tikzpicture}
		}
		\caption{Two perfect graphs combined by composition operation \label{fig:composition_example}}
		{}
	\end{figure}	
	\item \textit{Disjoint union}: \\
	Let $G_1$, $G_2$ be two disjoint graphs. The disjoint union of $G_1$ and $G_2$ is simply $G = G_1 \cup G_2$ with $V(G)=V(G_1) \cup V(G_2)$ and $E(G)=E(G_1) \cup E(G_2)$. Disjoint union of two perfect graphs is again perfect (obvious from the definition of perfect graphs).  		
	\item \textit{Join}: \\
	Let $G_1$, $G_2$ be disjoint graphs. The join of $G_1$ and $G_2$, say $G$, is obtained by connecting all vertices in $G_1$ to all those in $G_2$.  A graph obtained from two perfect graphs via join operation is perfect. To show that this operation indeed preserves perfection, assume that $G_1$ and $G_2$ are perfect. Consider $\bar{G}$ which is simply $\bar{G_1} \cup \bar{G_2}$. $G_1$ and $G_2$ being perfect, $\bar{G_1}$ and $\bar{G_2}$ are so, too, by WPGT. As the disjoint union of two perfect graphs is perfect, $\bar{G} = \bar{G_1} \cup \bar{G_2}$ and therefore $G$ is perfect.
	\item \textit{Complement}: \\
	By WPGT, the complement of a perfect graph is again perfect.  
\end{itemize}
\smallskip

The algorithm we designed to generate a random partition of a given vertex set into clusters takes a pair of integers to be respectively the lower and upper bound on the sizes of clusters as input. The first phase of the algorithm initially creates a random ordering $\sigma$ of vertices. Then, at each step, the size $r$ of the cluster under construction is set uniformly random between the lower and upper bounds input to the algorithm, and a separator is placed $r$-many elements ahead of the previous cluster's last vertex in $\sigma$. The set of vertices between two consecutive points the separator is placed serves as one cluster. This procedure continues until all vertices in $V$ belong to some cluster.

All of the perfect graph instances and the associated vertex partitions that we have generated with the presented method can be accessed online at \url{http://www.ie.boun.edu.tr/~taskin/data/pg/}. Our algorithm for random perfect graph generation and the large collection of randomly generated perfect graph instances we provide online serve as a first step to overcome the difficulty of finding perfect graph instances in their general form.


\section{Computational Study}
\label{section:compstudy}

In this section, we present the results of a series of experiments we conducted to evaluate the performance of our cutting plane procedure by comparing it with that of the integer programming formulation Model 1, and the branch-and-price algorithm by \cite{furini2017exact}. 

The algorithms described in the previous section are implemented in C++. We executed the algorithms on a computer with 2.00-GHz Intel Xeon CPU. We used CPLEX version 12.8 in all our experiments, and used the callback mechanism of it,  which enabled us to solve the problem on a single solution tree and generate cuts not only from the optimal solutions to the master problem but also from feasible solutions to it. To solve the SDP formulations, we used MOSEK version 8.1.0.24. The reason for us to select this SDP solver among several others is that MOSEK turned out to be the best-performing one according to the results of benchmark by \cite{mittelmann2018benchmarks} (available at \url{http://plato.asu.edu/ftp/sparse_sdp.html}) conducted on a large set of problem instances, both in terms of solution times and the number of instances that are solved optimally.

We generated random test test instances of with varying size and densities. The number of vertices ($n$) range from 50 to 500. The edge density of a graph is defined as $ \frac{m}{\frac{n(n-1)}{2}} $, where the numerator $m$ denotes the number of edges in the graph, and the denominator is the maximum number edges that it can have. We used four different average edge densities while generating our instances; 0.1, 0.3, 0.5, and 0.7. For each pair of $n$ and average edge density value, we used five random graph instances. 

When an instance could not be solved to optimality by any of the methods we consider, we report the optimality gap percentage, which is calculated as $\frac{UB-LB}{UB}\times 100$ with $UB$ and $LB$ denoting the upper and lower bounds respectively, to give an indication of how far a feasible solution is away from optimal.

We set a time limit of 1200 seconds throughout all the experiments for each one of the methods we experiment with. When an instance could not be solved optimally within the limit, we take the solution time of that instance as 1200 seconds. In our experiments, the B\&P algorithm by \cite{furini2017exact} failed to report optimality gaps for instances that could not be solved optimally within the time limit. Therefore, we use the optimality gap values only when comparing the cutting plane algorithm to the IP formulation.

In our first set of experiments, we test the performance of our cutting plane approach for perfect graphs using the SDP-based method by \cite{grotschel1984polynomial} in the subproblem versus using the maximum clique algorithm MCS by \cite{tomita2010simple}. Table \ref{tab:SDPvsTomita} summarizes the computational results for perfect graph instances with cluster sizes varying between 2 and 5. 
The first three columns of this table provides some information about the instances by respectively listing the number of vertices (``$n$''), average edge density (``Avg density''), and average number of clusters (``Avg \# clust'') across five random instances. The next two sets of columns show the results of our experiments for the two versions of our algorithm for perfect graphs under ``Cutting Plane w/ SDP'' and ``Cutting Plane w/ MCS'' headings, respectively. For the cutting plane method coupled with the SDP-based method of \cite{grotschel1984polynomial}, Table \ref{tab:SDPvsTomita} respectively lists the number of instances that could be optimally solved among five (``\# opt''), average optimality gap percentages over instances that were not optimally solved within the given time limit of 1200 seconds (``Avg \% gap in nonopt''), average solution time in seconds over instances that are optimally solved (``Avg time in opt''), and average solution time over all instances (``Avg overall time'') in columns 4--7. Columns 8--11 contain the same set of results as columns 4--7 for the cutting plane method coupled with MCS. 
Each row of the table reports the average values across runs on five independent instances,
where the bottom row recaps the results by providing the sum for ``\# opt'' column and the averages for the others. 

\medskip

\begin{table}[htbp]
	\caption{Experimental results for perfect graph instances with small clusters to compare the SDP-based method with the MCS of \cite{tomita2010simple}}
	\label{tab:SDPvsTomita}
		\centering
		\resizebox{\textwidth}{!}{
			\begin{tabular}{cc S[table-format=3.1] c S[table-format=2.2] S[table-format=3.2] S[table-format=4.2] c S[table-format=2.2] S S[table-format=4.2]}
				\toprule
				&    &    &  \multicolumn{4}{c}{\textbf{Cutting Plane w/ SDP}} & \multicolumn{4}{c}{\textbf{Cutting Plane w/ MCS}} \\
				\cmidrule(lr){4-7}\cmidrule(lr){8-11}
				\parbox[t]{1.4cm}{\centering $\boldsymbol{n}$}  &  \parbox[t]{1.7cm}{\centering \textbf{Avg\\density}} & \parbox[t]{1.7cm}{\centering \textbf{Avg \\ \# \\ clust}} & \parbox[t]{1.5cm}{\centering \textbf{\#\\opt}} & \parbox[t]{1.7cm}{\centering \textbf{Avg \\ \% gap \\nonopt}} & \parbox[t]{1.8cm}{\centering \textbf{Avg \\ time\\in opt}} & \parbox[t]{1.7cm}{\centering \textbf{Avg \\ overall \\time}} & \parbox[t]{1.5cm}{\centering \textbf{\#\\opt}} & \parbox[t]{1.7cm}{\centering \textbf{Avg \\ \% gap \\nonopt}} & \parbox[t]{1.8cm}{\centering \textbf{Avg \\ time\\in opt}} & \parbox[t]{1.7cm}{\centering \textbf{Avg \\ overall \\ time}} \\				
				\toprule

				\multirow{4}[0]{*}{50} & 0.110 & 14.4  & 5     &       & 3.15  & 3.15  & 5     &       & 0.28  & 0.28 \\
				& 0.293 & 13.8  & 5     &       & 2.21  & 2.21  & 5     &       & 0.20  & 0.20 \\
				& 0.495 & 14.2  & 5     &       & 3.44  & 3.44  & 5     &       & 0.14  & 0.14 \\
				& 0.710 & 14.0  & 5     &       & 6.50  & 6.50  & 5     &       & 0.35  & 0.35 \\
				\midrule
				\multirow{4}[0]{*}{100} & 0.096 & 28.4  & 5     &       & 94.59 & 94.59 & 5     &       & 0.29  & 0.29 \\
				& 0.300 & 29.4  & 5     &       & 88.51 & 88.51 & 5     &       & 0.19  & 0.19 \\
				& 0.488 & 28.0  & 5     &       & 105.39 & 105.39 & 5     &       & 0.34  & 0.34 \\
				& 0.705 & 28.8  & 5     &       & 314.56 & 314.56 & 5     &       & 1.41  & 1.41 \\
				\midrule
				\multirow{4}[0]{*}{150} & 0.098 & 42.6  & 3     & 50.00 & 580.73 & 828.44 & 5     &       & 0.28  & 0.28 \\
				& 0.298 & 42.2  & 5     &       & 720.72 & 720.72 & 5     &       & 0.20  & 0.20 \\
				& 0.498 & 44.2  & 1     & 34.58 & 624.60 & 1084.92 & 5     &       & 0.50  & 0.50 \\
				& 0.693 & 43.4  & 0     & 34.29 &       & 1200.00 & 5     &       & 6.12  & 6.12 \\
				\midrule
				\multirow{4}[0]{*}{200} & 0.107 & 56.8  & 0     & 53.33 &       & 1200.00 & 5     &       & 0.21  & 0.21 \\
				& 0.304 & 57.2  & 0     & 48.33 &       & 1200.00 & 5     &       & 0.20  & 0.20 \\
				& 0.496 & 57.2  & 0     & 48.03 &       & 1200.00 & 5     &       & 0.67  & 0.67 \\
				& 0.703 & 57.6  & 0     & 56.20 &       & 1200.00 & 4     & 9.09  & 232.05 & 425.64 \\
				\midrule
				\multirow{4}[0]{*}{250} & 0.112 & 71.4  & 0     & 77.33 &       & 1200.00 & 5     &       & 0.19  & 0.19 \\
				& 0.304 & 71.2  & 0     & 82.42 &       & 1200.00 & 5     &       & 0.42  & 0.42 \\
				& 0.497 & 72.2  & 0     & 76.96 &       & 1200.00 & 5     &       & 1.89  & 1.89 \\
				& 0.693 & 70.2  & 0     & 70.12 &       & 1200.00 & 2     & 13.33 & 254.25 & 821.70 \\
				\midrule
				\multirow{4}[0]{*}{300} & 0.110 & 86.6  & 0     &   {--}    &       & 1200.00 & 5     &       & 0.23  & 0.23 \\
				& 0.302 & 88.6  & 0     &   {--}    &       & 1200.00 & 5     &       & 1.04  & 1.04 \\
				& 0.506 & 83.4  & 0     & 92.86 &       & 1200.00 & 5     &       & 4.72  & 4.72 \\
				& 0.691 & 85.6  & 0     & 86.41 &       & 1200.00 & 2     & 18.51 & 319.48 & 847.79 \\
				\midrule
				\multirow{4}[0]{*}{350} & 0.117 & 102.6 & 0     &   {--}    &       & 1200.00 & 5     &       & 0.23  & 0.23 \\
				& 0.301 & 100.0 & 0     &   {--}    &       & 1200.00 & 5     &       & 0.51  & 0.51 \\
				& 0.508 & 98.4  & 0     & 96.91 &       & 1200.00 & 4     & 8.33  & 15.55 & 252.44 \\
				& 0.698 & 99.6  & 0     & 96.21 &       & 1200.00 & 0     & 18.72 &       & 1200.00 \\
				\midrule
				\multirow{4}[0]{*}{400} & 0.111 & 114.8 & 0     &   {--}    &       & 1200.00 & 5     &       & 0.34  & 0.34 \\
				& 0.315 & 114.0 & 0     &   {--}    &       & 1200.00 & 5     &       & 0.65  & 0.65 \\
				& 0.502 & 114.2 & 0     &   {--}    &       & 1200.00 & 4     & 8.33  & 25.32 & 260.25 \\
				& 0.692 & 112.8 & 0     & 97.71 &       & 1200.00 & 0     & 24.55 &       & 1200.00 \\
				\midrule
				\multirow{4}[0]{*}{450} & 0.117 & 130.2 & 0     &   {--}    &       & 1200.00 & 5     &       & 0.39  & 0.39 \\
				& 0.309 & 130.4 & 0     &  {--}     &       & 1200.00 & 5     &       & 2.02  & 2.02 \\
				& 0.507 & 128.4 & 0     &   {--}    &       & 1200.00 & 4     & 10.00 & 64.26 & 291.41 \\
				& 0.696 & 125.8 & 0     &   {--}    &       & 1200.00 & 0     & 33.30 &       & 1200.00 \\
				\midrule
				\multirow{4}[0]{*}{500} & 0.117 & 142.8 & 0     &   {--}    &       & 1200.00 & 5     &       & 0.59  & 0.59 \\
				& 0.296 & 143.0 & 0     &   {--}    &       & 1200.00 & 5     &       & 1.05  & 1.05 \\
				& 0.507 & 144.2 & 0     &   {--}    &       & 1200.00 & 1     & 15.22 & 74.45 & 974.89 \\
				& 0.695 & 141.0 & 0     & 98.36 &       & 1200.00 & 0     & 30.15 &       & 1200.00 \\
				\bottomrule \\[-0.3cm]
				&       &       & {\bftab 49} & {\bftab 70.59} & {\bftab 231.31} & {\bftab \phantom{0}951.31} & {\bftab 166} & {\bftab 17.23} & {\bftab 28.08} & {\bftab \phantom{0}217.50} \\
				\bottomrule
	\end{tabular}%
	}
\end{table}%

We observe from the results listed in Table \ref{tab:SDPvsTomita} that solving the subproblem via the MCS algorithm by \cite{tomita2010simple} clearly yields superior results in terms of the number of instances solved to optimality, average optimality gap, and average amount of time spent. 
As $n$ and edge density increase, the performance of both methods deteriorate as expected; however, coupling of the cutting plane method with MCS outperforms the other in every aspect for all the instances. Out of the 200 instances we experiment with, the version that uses MCS could optimally solve 166 of them, whereas the one that solves SDP models could only solve 49 instances to optimality. Moreover, when we use the SDP-based method in the subproblem, we observe that in many instances with 300 or more vertices, the algorithm could not even finish solving the maximum clique problem for the first selection the master problem outputs. In such cases, no optimality gap could be reported, which is revealed by the cells with a ``--" sign in ``Avg \% gap in nonopt'' column for groups of instances for which the number of optimally solved instances shown in the fourth column is zero. In terms of the overall averages shown in the bottom row, when the SDP-based method is used in the subproblem, the average percentage gap and average time spent over all instances are three times higher, and the average time spent in optimally solved instances is seven times higher. 

The results in Table \ref{tab:SDPvsTomita} show that using the SDP-based method in the subproblem of the cutting plane method leads to relatively poor performance in all respects. Therefore, we utilize MCS in the subproblem of our solution procedure for the rest of our computational experiments. In the remaining portion of this section, we present the experimental results of the IP formulation, B\&P algorithm by \cite{furini2017exact}, and our cutting plane method. 

We tested the three methods on 600 test instances in total, 200 for each cluster size range. In order to give an idea on the general performances first, we begin with presenting a brief synopsis of our overall results in Table \ref{tab:perfectsummary}. 
In the first two columns of Table \ref{tab:perfectsummary}, we provide information about the cluster sizes and graph densities by combining them into three groups. The ``small", ``medium", and ``large" cluster sizes respectively correspond to those having 2--5, 4--7, and 6--9 vertices in them. The ``low" and ``high" densities denote the edge densities between 0.1--0.3 and 0.5--0.7, respectively, where the ``all" category contains the entire set of edge densities.
In the next three sets of columns, we report the results of the three methods under ``IP formulation", ``B\&P", and ``Cutting plane" headings. For each method, we provide the number of optimally solved instances (``\# opt"), average optimality gap percentages (``Avg \% gap"), and average solution times in seconds (``Avg time"), except that we do not report the gap values for the B\&P method as mentioned before.
The average gap percentages and solution times in each row incorporate all instances in that category, both the optimally solved ones and the others. 
Each row of the table provides the total or average values across all $n$ values for that category, where the last one shows the total number of optimally solved instances and the overall averages of gaps and solution times.

\begin{table}[htbp]	
	\caption{Summary of experimental results for all perfect graph instances}
	\label{tab:perfectsummary}
	{	\centering
		\resizebox{0.99\textwidth}{!}{
			\begin{tabular}{c c S[table-format=3.0] S[table-format=2.2] S[table-format=3.2] c S[table-format=3.0] S[table-format=3.2] c S[table-format=3.0] S[table-format=2.2] S[table-format=3.2]}
				\toprule
				&    & \multicolumn{3}{c}{\textbf{IP formulation}} & & \multicolumn{2}{c}{\textbf{B\&P}} & & \multicolumn{3}{c}{\textbf{Cutting plane}}\\
				\cmidrule(lr){3-6}\cmidrule(lr){7-9}\cmidrule(lr){10-12}
				\parbox[t]{1.8cm}{\centering \textbf{Sizes of \\clusters}} &\parbox[t]{1.8cm}{\centering \textbf{Density}} & \parbox[t]{1.6cm}{\centering \textbf{\#\\opt}} & \parbox[t]{1.6cm}{\centering \textbf{Avg \\ \% gap} } & \parbox[t]{1.6cm}{\centering \textbf{Avg \\ time}} && \parbox[t]{1.6cm}{\centering \textbf{\#\\opt}} & \parbox[t]{1.6cm}{\centering \textbf{Avg \\ time }} && \parbox[t]{1.6cm}{\centering \textbf{\#\\opt}} & \parbox[t]{1.6cm}{\centering \textbf{Avg \\ \% gap}} & \parbox[t]{1.6cm}{\centering \textbf{Avg \\ time }} \\
				\midrule
				
				\multirow{3}[0]{*}{small} & low & 78 & 19.37 & 372.23 &    & 53  & 731.35 &    & 100 & 0.00 & 0.48 \\
				& high & 48 & 44.03 & 707.97 &    & 88  & 469.88 &    & 66 & 7.26 & 434.51 \\[0.05cm]
				& all & 126 & 31.70 & 540.10 &    & 141 & 600.62 &    & 166 & 3.63 & 217.50 \\
				\midrule
				\multirow{3}[0]{*}{medium} & low & 97 & 2.75 & 175.36 &    & 80 &  339.75 &    & 100 & 0.00 & 0.96 \\
				& high & 73 & 24.17 & 448.30 &    & 92 &  365.42 &    & 88 & 3.57 & 177.81 \\[0.05cm]
				& all & 170 & 13.46 & 311.83 &    & 172  & 352.59 &    & 188 & 1.78 & 89.39 \\
				\midrule
				\multirow{3}[0]{*}{large} & low & 100 & 0.00 & 72.54 &    & 100 & 9.34 &    & 100 & 0.00 & 0.47 \\
				& high & 85 & 15.00 & 302.46 &    & 92  & 331.86 &    & 100 & 0.00 & 8.42 \\[0.05cm]
		        & all & 185 & 7.50 & 187.50 &    & 192 & 170.60 &    & 200 & 0.00 & 4.44\\
				\midrule
				\multicolumn{2}{c}{} & {\bftab 481} & {\bftab 17.55} & {\bftab 346.48} &    & {\bftab 505} & {\bftab 374.60} &    & {\bftab 554} & {\bftab \phantom{0}1.80} & {\bftab 103.77} \\
				\bottomrule				
			\end{tabular}
	}
}
{}
\end{table}%

Firstly, the values in the last row of Table \ref{tab:perfectsummary} indicate that the cutting plane algorithm delivers the best overall performance in terms of the number of optimally solved problems, as well as optimality gap and solution time.
Out of the 600 instances in total, the cutting plane algorithm solves about 92\% of them to optimality, whereas IP and B\&P could solve 80\% and 84\%, respectively. In terms of the average optimality gap, our algorithm yields an order of magnitude better optimality gaps on the average as compared to the IP formulation, and the average solution time is about 30\% and 27\% of those of IP and B\&P, respectively. 
A closer look at the results in terms of density categories reveals that our cutting plane algorithm is successful in solving the low-density instances particularly.
The B\&P algorithm can solve a higher number of high-density instances optimally for clusters with small and medium size than does the IP formulation and the cutting plane method, but it yields larger average solution times than those of the cutting plane method.
Regarding the effect of cluster sizes, increasing the number vertices per cluster leads to improved performance in all three methods, regardless of edge densities. In case of large clusters in particular, the cutting plane algorithm can optimally solve all the instances in a matter of a few seconds on the average.

Having obtained a first impression on the overall results, we next evaluate the performances of the three methods in more detail for each cluster size category separately. We start with the test instances having small clusters, in which perfect graphs are coupled with clusters containing 2 to 5 vertices.
Figure \ref{fig:comp_results_cl_2_5} illustrates a comparison of performances of the three methods. 
The three charts in Figures \ref{subfig_a:perc_opt_cl_2_5}--\ref{subfig_c:time_all_cl_2_5} respectively show the percentage of instances solved optimally within the given time limit (``\% opt"), average solution time of optimally solved instances (``Avg time in opt"), and average solution time over all instances (``Avg time") versus the number of vertices of input graphs ($n$). Each point in these charts give the average value over 20 instances, five for each one of the four different density values, for a given $n$.
In terms of the percentage of optimally solved instances (Figure \ref{subfig_a:perc_opt_cl_2_5}), the cutting plane algorithm outperforms the other two methods except for two values of $n$. 
The solution times of the cutting plane method is almost always the best as compared to those of the other two methods, and the difference is notably high most of the time (Figure \ref{subfig_c:time_all_cl_2_5}). 
Also, the average solution time of the cutting plane method and IP in optimally solved instances does not vary much with increasing $n$ values, whereas those of B\&P exhibit an increasing trend (Figure \ref{subfig_b:time_opt_cl_2_5}). 

\begin{figure}[h]
	\centering
	\begin{subfigure}{.33\textwidth}
		\centering
		\includegraphics[width=.99\linewidth]{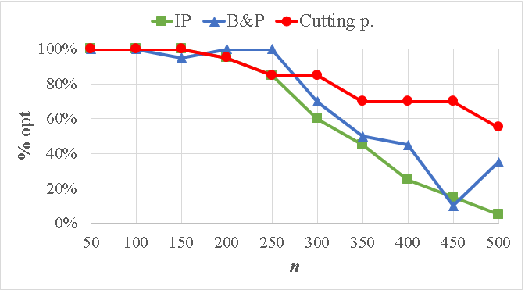}
		\caption{\footnotesize Percentage of optimally solved}
		\label{subfig_a:perc_opt_cl_2_5}
	\end{subfigure}%
	\begin{subfigure}{.33\textwidth}
		\centering
		\includegraphics[width=.99\linewidth]{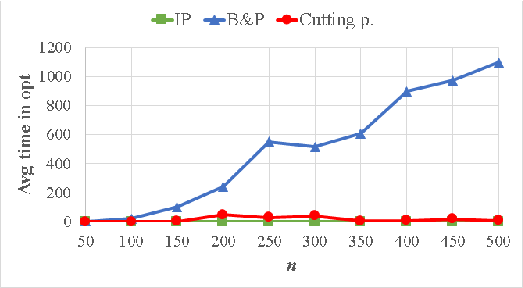}
		\caption{\footnotesize Average time in optimally solved}
		\label{subfig_b:time_opt_cl_2_5}
	\end{subfigure}
	\begin{subfigure}{.33\textwidth}
		\centering
		\includegraphics[width=.99\linewidth]{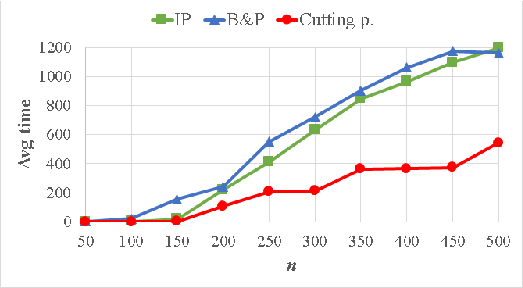}
		\caption{\footnotesize Average time over all instances}
		\label{subfig_c:time_all_cl_2_5}
	\end{subfigure}
	\caption{Results on perfect graph instances with 2-5 vertices in each cluster}
	\label{fig:comp_results_cl_2_5}
\end{figure}

We next provide a more detailed summary of the results for instances with small clusters in Tables \ref{tab:perfectsmall_low} and \ref{tab:perfectsmall_high}. Table \ref{tab:perfectsmall_low} summarizes the results for perfect graph instances having edge density 0.1 and 0.3 (which are referred to as {\it low density}). The first three columns are the same as in Table \ref{tab:SDPvsTomita}. 
The following three sets of columns with headings ``IP formulation'', ``B\&P'', and ``Cutting Plane'' contains the results for the associated algorithms we experiment with. 
The five columns under `IP formulation'' and ``Cutting Plane'' headings report the same set of results respectively for the two methods, which are the number of instances that are solved to optimality among five instances (``\# opt''), average optimality gap percentages over instances that could not be solved optimally within the given time limit of 1200 seconds (``Avg \% gap in nonopt'') and over all instances (``Avg \% gap overall''), average solution time in seconds over instances that are optimally solved (``Avg time in opt'') and over all instances (``Avg time overall''). The rightmost column of the table lists the average percentage of time spent in the subproblem of our cutting plane algorithm across five instances (``Avg \% time in subpr''). 
For B\&P of \cite{furini2017exact}, columns associated with the optimality gap values are excluded.
Each row of this table reports the average values across five independent problem instances. Finally, the bottom row provides the totals for columns containing the number of instances solved optimally (``\# opt''), and the averages for the others.

\begin{sidewaystable}[htbp]
	\centering
	\caption{\label{tab:perfectsmall_low} Experimental results for low-density perfect graph instances with small clusters}
	\resizebox{\textwidth}{!}{
		\begin{tabular}{c c S[table-format=3.1] 
		c S[table-format=3.2] S[table-format=3.2] S[table-format=4.2] S[table-format=4.2]
		c S[table-format=4.2] S[table-format=4.2] 
		c S[table-format=3.2] S[table-format=1.2] S[table-format=1.2] S[table-format=1.2] S[table-format=1.2] }
			\toprule
			&    &    & \multicolumn{5}{c}{\textbf{IP formulation}} & \multicolumn{3}{c}{\textbf{B\&P}} & \multicolumn{6}{c}{\textbf{Cutting Plane}}\\
			\cmidrule(lr){4-8}\cmidrule(lr){9-11}\cmidrule(lr){12-17}
			\parbox[t]{1.3cm}{\centering $\boldsymbol{n}$}  &  \parbox[t]{1.3cm}{\centering \textbf{Avg\\density}} & \parbox[t]{1.3cm}{\centering \textbf{Avg \\ \# \\ clust}} & \parbox[t]{1.3cm}{\centering \textbf{\#\\opt}} & \parbox[t]{1.3cm}{\centering \textbf{Avg \\ \% gap \\in \\ nonopt}} & \parbox[t]{1.3cm}{\centering \textbf{Avg \\ \% gap \\overall}} & \parbox[t]{1.5cm}{\centering \textbf{Avg \\ time \\in opt}} & \parbox[t]{1.5cm}{\centering \textbf{Avg \\ time \\overall}} &
			\parbox[t]{1.3cm}{\centering \textbf{\#\\opt}} & \parbox[t]{1.5cm}{\centering \textbf{Avg \\ time \\in opt}} & \parbox[t]{1.5cm}{\centering \textbf{Avg \\ time \\overall}} &
			\parbox[t]{1.3cm}{\centering \textbf{\#\\opt}} & \parbox[t]{1.3cm}{\centering \textbf{Avg \\ \% gap \\in \\ nonopt}} & \parbox[t]{1.3cm}{\centering \textbf{Avg \\ \% gap \\overall}} & \parbox[t]{1.5cm}{\centering \textbf{Avg \\ time \\in opt}} & \parbox[t]{1.5cm}{\centering \textbf{Avg \\ time \\overall}} & \parbox[t]{1.3cm}{\centering \textbf{Avg \\ \% time \\in \\subpr}}   \\
			\midrule				
			{\multirow{2}[1]{*}{50}} & 0.110 & {14.4} & 
			5  &    & 0.00 & 0.13 & {0.13} & 5  & 0.14 & {0.14} & 5  &    & 0.00 & 0.28 & 0.28 & {45.43} \\
			{} & 0.293 & {13.8} & 5  &  & 0.00 & 0.21 & {0.21} & 5  & 2.02 & {2.02} & 5  &    & 0.00 & 0.20 & 0.20 & {37.60} \\
			\midrule
			{\multirow{2}[0]{*}{100}} & 0.096 & {28.4} & 
			5  &    & 0.00 & 0.90 & {0.90} & 5  & 23.07 & {23.07} & 5  &    & 0.00 & 0.29 & 0.29 & {44.59} \\
			{} & 0.300 & {29.4} & 5  &    & 0.00 & 3.24 & {3.24} & 5   & 32.02 & {32.02} & 5  &    & 0.00 & 0.19 & 0.19 & {45.24} \\
			\midrule
			{\multirow{2}[0]{*}{150}} & 0.098 & {42.6} & 5  &    & 0.00 & 2.97 & {2.97} & 5   & 146.58 & {146.58} & 5  &    & 0.00 & 0.28 & 0.28 & {48.99} \\
			{} & 0.298 & {42.2} & 5  &    & 0.00 & 8.70 & {8.70} & 4  & 140.79 & {352.63} & 5  &    & 0.00 & 0.20 & 0.20 & {44.94} \\
			\midrule
			{\multirow{2}[0]{*}{200}} & 0.107 & {56.8} & 5  &    & 0.00 & 7.93 & {7.93} & 5   & 420.52 & {420.52} & 5  &    & 0.00 & 0.21 & 0.21 & {54.51} \\
			{} & 0.304 & {57.2} & 5  &    & 0.00 & 36.53 & {36.53} & 5   & 293.05 & {293.05} & 5  &    & 0.00 & 0.20 & 0.20 & {54.27} \\
			\midrule
			{\multirow{2}[0]{*}{250}} & 0.112 & {71.4} & 5  &    & 0.00 & 30.25 & {30.25} & 5   & 1023.62 & {1023.62} & 5  &    & 0.00 & 0.19 & 0.19 & {49.82} \\
			{} & 0.304 & {71.2} & 5  &    & 0.00 & 85.98 & {85.98} & 5   & 712.18 & {712.18} & 5  &    & 0.00 & 0.42 & 0.42 & {51.83} \\
			\midrule
			{\multirow{2}[0]{*}{300}} & 0.110 & {86.6} & 5  &    & 0.00 & 74.87 & {74.87} & 0  & & {1200.00} & 5  &    & 0.00 & 0.23 & 0.23 & {57.38} \\
			{} & 0.302 & {88.6} & 5  &    & 0.00 & 340.37 & {340.37} & 4   & 726.59 & {821.27} & 5  &    & 0.00 & 1.04 & 1.04 & {62.28} \\
			\midrule
			{\multirow{2}[0]{*}{350}} & 0.117 & {102.6} & 5  &    & 0.00 & 204.47 & {204.47} & 0  &    & {1200.00} & 5  &    & 0.00 & 0.23 & 0.23 & {63.60} \\
			{} & 0.301 & {100.0} & 4  & 94.52 & 18.90 & 689.68 & {791.74} & 0  &    & {1200.00} & 5  &    & 0.00 & 0.51 & 0.51 & {69.82} \\
			\midrule
			{\multirow{2}[0]{*}{400}} & 0.111 & {114.8} & 5  &    & 0.00 & 265.87 & {265.87} & 0   &    & {1200.00} & 5  &    & 0.00 & 0.34 & 0.34 & {70.10} \\
			{} & 0.315 & {114.0} & 0  & 51.02 & 51.02 &    & {1200.00} & 0  &    & {1200.00} & 5  &    & 0.00 & 0.65 & 0.65 & {68.71} \\
			\midrule
			{\multirow{2}[0]{*}{450}} & 0.117 & {130.2} & 3  & 98.42 & 39.37 & 523.80 & {794.28} & 0   &    & {1200.00} & 5  &    & 0.00 & 0.39 & 0.39 & {69.22} \\
			{} & 0.309 & {130.4} & 0  & 99.17 & 99.17 &    & {1200.00} & 0   &    & {1200.00} & 5  &    & 0.00 & 2.02 & 2.02 & {76.74} \\
			\midrule
			{\multirow{2}[1]{*}{500}} & 0.117 & {142.8} & 1  & 98.59 & 78.87 & 1180.90 & {1196.18} & 0   &    & {1200.00} & 5  &    & 0.00 & 0.59 & 0.59 & {71.36} \\
			{} & 0.296 & {143.0} & 0  & 100.00 & 100.00 &    & {1200.00} & 0   &    & {1200.00} & 5  &    & 0.00 & 1.05 & 1.05 & {71.71} \\
			\midrule
			&    &    & {\bf 78} & {\bf \phantom{0}90.29} & {\bf \phantom{0}19.37} & {\bf \phantom{0}203.34} & {\bf \phantom{0}372.23} & {\bf 53} &  {\bf \phantom{0}320.05} & {\bf \phantom{0}731.35} & {\bf 100} & {\bf -} & {\bf \phantom{}0.00} & {\bf \phantom{}0.48} & {\bf \phantom{}0.48} & {\bf 57.91} \\			
			\bottomrule
		\end{tabular}%
	}
\end{sidewaystable}

\begin{sidewaystable}[htbp]
	\centering
	\caption{Experimental results for high-density perfect graph instances with small clusters \label{tab:perfectsmall_high}}
	\resizebox{\textwidth}{!}{
		\begin{tabular}{c c S[table-format=3.1] 
		c S[table-format=3.2] S[table-format=3.2] S[table-format=3.2] S[table-format=4.2]
		c S[table-format=4.2] S[table-format=4.2] 
		c S[table-format=2.2] S[table-format=2.2] S[table-format=3.2] S[table-format=4.2] S[table-format=2.2] }
			\toprule
			&    &    & \multicolumn{5}{c}{\textbf{IP formulation}} & \multicolumn{3}{c}{\textbf{B\&P}} & \multicolumn{6}{c}{\textbf{Cutting Plane}}\\
			\cmidrule(lr){4-8}\cmidrule(lr){9-11}\cmidrule(lr){12-17}
			\parbox[t]{1.3cm}{\centering $\boldsymbol{n}$}  &  \parbox[t]{1.3cm}{\centering \textbf{Avg\\density}} & \parbox[t]{1.3cm}{\centering \textbf{Avg \\ \# \\ clust}} & \parbox[t]{1.3cm}{\centering \textbf{\#\\opt}} & \parbox[t]{1.3cm}{\centering \textbf{Avg \\ \% gap \\in \\ nonopt}} & \parbox[t]{1.3cm}{\centering \textbf{Avg \\ \% gap \\overall}} & \parbox[t]{1.5cm}{\centering \textbf{Avg \\ time \\in opt}} & \parbox[t]{1.5cm}{\centering \textbf{Avg \\ time \\overall}} &
			\parbox[t]{1.3cm}{\centering \textbf{\#\\opt}} & \parbox[t]{1.5cm}{\centering \textbf{Avg \\ time \\in opt}} & \parbox[t]{1.5cm}{\centering \textbf{Avg \\ time \\overall}} &
			\parbox[t]{1.3cm}{\centering \textbf{\#\\opt}} & \parbox[t]{1.3cm}{\centering \textbf{Avg \\ \% gap \\in \\ nonopt}} & \parbox[t]{1.3cm}{\centering \textbf{Avg \\ \% gap \\overall}} & \parbox[t]{1.5cm}{\centering \textbf{Avg \\ time \\in opt}} & \parbox[t]{1.5cm}{\centering \textbf{Avg \\ time \\overall}} & \parbox[t]{1.3cm}{\centering \textbf{Avg \\ \% time \\in \\subpr}}   \\
			\midrule
			
			{\multirow{2}[2]{*}{50}} & 0.495 & 14.2 & 5  &    & 0.00 & 0.62 & 0.62 & 5   & 2.33 & 2.33 & 5  &    & 0.00 & 0.14 & 0.14 & 35.20 \\
			& 0.710 & 14.0 & 5  &    & 0.00 & 0.79 & 0.79 & 5   & 1.91 & 1.91 & 5  &    & 0.00 & 0.35 & 0.35 & 50.82 \\
			\midrule
			{\multirow{2}[2]{*}{100}} & 0.488 & 28.0 & 5  &    & 0.00 & 3.02 & 3.02 & 5   & 17.38 & 17.38 & 5  &    & 0.00 & 0.34 & 0.34 & 55.38 \\
			& 0.705 & 28.8 & 5  &    & 0.00 & 5.65 & 5.65 & 5   & 14.91 & 14.91 & 5  &    & 0.00 & 1.41 & 1.41 & 31.38 \\
			\midrule
			{\multirow{2}[2]{*}{150}} & 0.498 & 44.2 & 5  &    & 0.00 & 23.10 & 23.10 & 5   & 68.55 & 68.55 & 5  &    & 0.00 & 0.50 & 0.50 & 53.36 \\
			{} & 0.693 & 43.4 & 5  &    & 0.00 & 38.99 & 38.99 & 5   & 61.09 & 61.09 & 5  &    & 0.00 & 6.12 & 6.12 & 19.41 \\
			\midrule
			{\multirow{2}[2]{*}{200}} & 0.496 & 57.2 & 5  &    & 0.00 & 84.10 & 84.10 & 5   & 152.09 & 152.09 & 5  &    & 0.00 & 0.67 & 0.67 & 55.49 \\
			& 0.703 & 57.6 & 4  & 9.09 & 1.82 & 637.24 & 749.79 & 5   & 102.61 & 102.61 & 4  & 9.09 & 1.82 & 232.05 & 425.64 & 3.11 \\
			\midrule
			{\multirow{2}[2]{*}{250}} & 0.497 & {72.2} & 5  &    & 0.00 & 480.98 & 480.98 & 5  & 253.53 & 253.53 & 5  &    & 0.00 & 1.89 & 1.89 & 48.41 \\
			& 0.693 & 70.2 & 2  & 23.89 & 14.33 & 830.33 & 1052.13 & 5   & 215.79 & 215.79 & 2  & 13.33 & 8.00 & 254.25 & 821.70 & 1.94 \\
			\midrule
			{\multirow{2}[2]{*}{300}} & 0.506 & 83.4 & 2  & 24.68 & 14.81 & 500.43 & 920.17 & 5   & 507.82 & 507.82 & 5  &    & 0.00 & 4.72 & 4.72 & 26.70 \\
			& 0.691 & 85.6 & 0  & 83.96 & 83.96 &    & 1200.00 & 5  & 358.41 & 358.41 & 2  & 18.51 & 11.10 & 319.48 & 847.79 & 3.03 \\
			\midrule
			{\multirow{2}[2]{*}{350}} & 0.508 & 98.4 & 0  & 94.75 & 94.75 &    & 1200.00 & 5   & 657.70 & 657.70 & 4  & 8.33 & 1.67 & 15.55 & 252.44 & 20.48 \\
			& 0.698 & 99.6 & 0  & 89.50 & 89.50 &    & 1200.00 & 5  & 559.38 & 559.38 & 0  & 18.72 & 18.72 &    & 1200.00 & 3.26 \\
			\midrule
			{\multirow{2}[2]{*}{400}} & 0.502 & 114.2 & 0  & 86.91 & 86.91 &    & 1200.00 & 4   & 1092.62 & 1114.09 & 4  & 8.33 & 1.67 & 25.32 & 260.25 & 31.24 \\
			& 0.692 & 112.8 & 0  & 97.81 & 97.81 &    & 1200.00 & 5   & 742.60 & 742.60 & 0  & 24.55 & 24.55 &    & 1200.00 & 4.76 \\
			\midrule
			{\multirow{2}[2]{*}{450}} & 0.507 & 128.4 & 0  & 98.72 & 98.72 &    & 1200.00 & 0   &    & 1200.00 & 4  & 10.00 & 2.00 & 64.26 & 291.41 & 33.84 \\
			& 0.696 & 125.8 & 0  & 98.32 & 98.32 &    & 1200.00 & 2   & 975.35 & 1110.14 & 0  & 33.30 & 33.30 &    & 1200.00 & 6.73 \\
			\midrule
			{\multirow{2}[2]{*}{500}} & 0.507 & 144.2 & 0  & 99.59 & 99.59 &    & 1200.00 & 2   & 1193.47 & 1197.39 & 1  & 15.22 & 12.18 & 74.45 & 974.89 & 9.07 \\
			& 0.695 & 141.0 & 0  & 100.00 & 100.00 &    & 1200.00 & 5   & 1059.82 & 1059.82 & 0  & 30.15 & 30.15 &    & 1200.00 & 7.89 \\
			\midrule
			&    &    & {\bf 48} & {\bf \phantom{0}75.60} & {\bf \phantom{0}44.03} & {\bf 236.84} & {\bf \phantom{0}707.97} & {\bf 88} & {\bf \phantom{0}423.02} & {\bf \phantom{0}469.88} & {\bf 66} & {\bf 17.23} & {\bf \phantom{0}7.26} & {\bf \phantom{0}62.59} & {\bf \phantom{0}434.51} & {\bf 25.07} \\			
			\bottomrule
		\end{tabular}%
	}
\end{sidewaystable}%

Table \ref{tab:perfectsmall_high} presents the results of our experiments conducted on instances having edge density 0.5 and 0.7 (which are referred to as {\it high density}) with small clusters. The structure of this table is the same as Table \ref{tab:perfectsmall_low}.
From the results listed in these two tables, we observe that our approach yields superior results to both of the others in terms of solution time, and to IP formulation in terms of optimality gap as well, consistent with the aggregate results presented in Table \ref{tab:perfectsummary}. 
For low-density instances, the B\&P method fails to optimally solve any of the instances with 350 or more vertices, while the cutting plane method does not seem to be affected from increasing $n$ values and solves all optimally in under one second on the average.
While increased edge densities negatively affect the performances of the IP formulation and the cutting plane method, it improves that of the B\&P method so that higher $n$ values impact its performance much less than the low-density case. 
Even though there is no cut-off value for $n$ above which the cutting plane method fails to solve optimally, high densities nevertheless make its performance more sensitive to increasing $n$ values.

Next, we present a detailed summary of the experimental results obtained using the same collection of perfect graphs, but with different sets of clusters. In this case, the aim is to investigate the effect of increased cluster sizes on performances. We use the same collection of perfect graphs as before, but with different sets of clusters having 4--7 and 6--9 vertices in them.
Figures \ref{fig:comp_results_cl_4_7} and \ref{fig:comp_results_cl_6_9} illustrate a comparison between the three methods in terms of percentage of optimally solved instances and solution times. 
When the clusters have 6--9 vertices, the cutting plane method outperforms in all three respects, and when cluster sizes vary between 4 and 7, the outperformance of the cutting plane persists except for few cases.
Improved solvability of the cutting plane method also reveals as just slight variability in the average solution times with respect to $n$ values. 

\begin{figure}[h]
	\centering
	\begin{subfigure}{.33\textwidth}
		\centering
		\includegraphics[width=.99\linewidth]{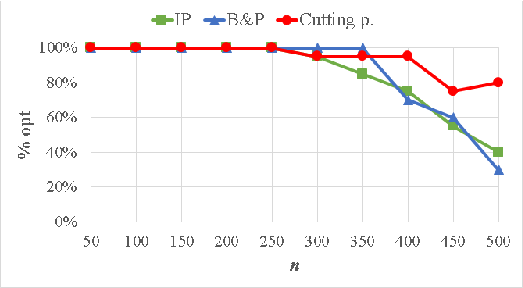}
		\caption{\footnotesize Percentage of optimally solved}
		\label{subfig_a:perc_opt_cl_4_7}
	\end{subfigure}%
	\begin{subfigure}{.33\textwidth}
		\centering
		\includegraphics[width=.99\linewidth]{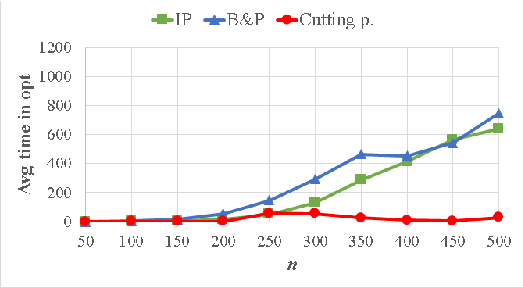}
		\caption{\footnotesize Average time in optimally solved}
		\label{subfig_b:time_opt_cl_4_7}
	\end{subfigure}
	\begin{subfigure}{.33\textwidth}
		\centering
		\includegraphics[width=.99\linewidth]{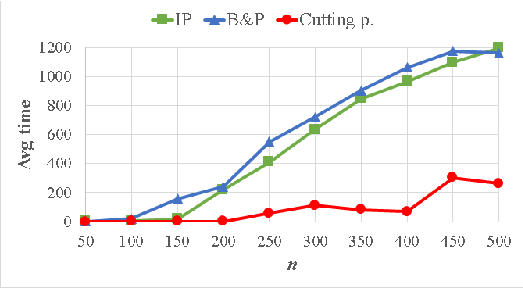}
		\caption{\footnotesize Average time over all instances}
		\label{subfig_c:time_all_cl_4_7}
	\end{subfigure}
    \caption{Results on perfect graph instances with 4-7 vertices in each cluster}
	\label{fig:comp_results_cl_4_7}
\end{figure}

\begin{figure}[h]
	\centering
	\begin{subfigure}{.33\textwidth}
		\centering
		\includegraphics[width=.99\linewidth]{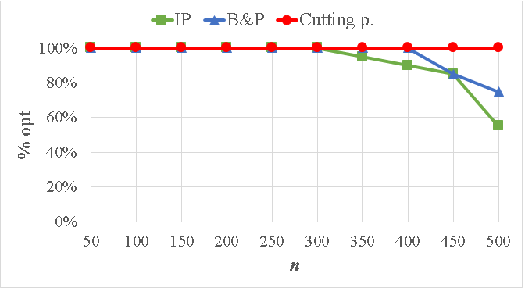}
		\caption{\footnotesize Percentage of optimally solved}
		\label{subfig_a:perc_opt_cl_6_9}
	\end{subfigure}%
	\begin{subfigure}{.33\textwidth}
		\centering
		\includegraphics[width=.99\linewidth]{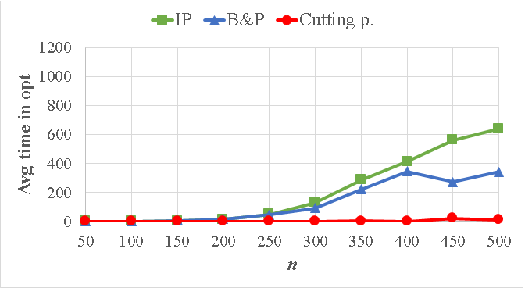}
		\caption{\footnotesize Average time in optimally solved}
		\label{subfig_b:time_opt_cl_6_9}
	\end{subfigure}
	\begin{subfigure}{.33\textwidth}
		\centering
		\includegraphics[width=.99\linewidth]{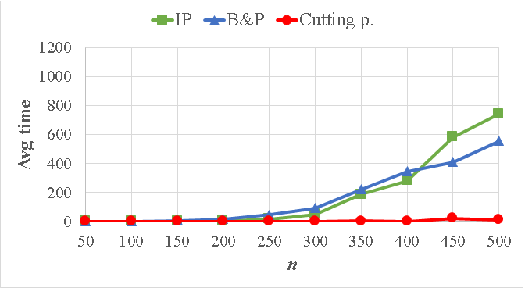}
		\caption{\footnotesize Average time over all instances}
		\label{subfig_c:time_all_cl_6_9}
	\end{subfigure}
    \caption{Results on perfect graph instances with 6-9 vertices in each cluster}
	\label{fig:comp_results_cl_6_9}
\end{figure}

\begin{sidewaystable}[htbp]
	\centering
	\caption{Experimental results for low-density perfect graph instances with medium-sized clusters \label{tab:perfectmedium_low}}
	\resizebox{\textwidth}{!}{
		\begin{tabular}{c c S[table-format=2.1] 
    		c S[table-format=3.2] S[table-format=2.2] S[table-format=3.2] S[table-format=4.2]
    		c S[table-format=3.2] S[table-format=4.2] 
    		c S[table-format=2.2] S[table-format=1.2] S[table-format=1.2] S[table-format=1.2] S[table-format=2.2] }
    			\toprule
    			&    &    & \multicolumn{5}{c}{\textbf{IP formulation}} & \multicolumn{3}{c}{\textbf{B\&P}} & \multicolumn{6}{c}{\textbf{Cutting Plane}}\\
    			\cmidrule(lr){4-8}\cmidrule(lr){9-11}\cmidrule(lr){12-17}
    			\parbox[t]{1.3cm}{\centering $\boldsymbol{n}$}  &  \parbox[t]{1.3cm}{\centering \textbf{Avg\\density}} & \parbox[t]{1.3cm}{\centering \textbf{Avg \\ \# \\ clust}} & \parbox[t]{1.3cm}{\centering \textbf{\#\\opt}} & \parbox[t]{1.3cm}{\centering \textbf{Avg \\ \% gap \\in \\ nonopt}} & \parbox[t]{1.3cm}{\centering \textbf{Avg \\ \% gap \\overall}} & \parbox[t]{1.5cm}{\centering \textbf{Avg \\ time \\in opt}} & \parbox[t]{1.5cm}{\centering \textbf{Avg \\ time \\overall}} &
    			\parbox[t]{1.3cm}{\centering \textbf{\#\\opt}} & \parbox[t]{1.5cm}{\centering \textbf{Avg \\ time \\in opt}} & \parbox[t]{1.5cm}{\centering \textbf{Avg \\ time \\overall}} &
    			\parbox[t]{1.3cm}{\centering \textbf{\#\\opt}} & \parbox[t]{1.3cm}{\centering \textbf{Avg \\ \% gap \\in \\ nonopt}} & \parbox[t]{1.3cm}{\centering \textbf{Avg \\ \% gap \\overall}} & \parbox[t]{1.5cm}{\centering \textbf{Avg \\ time \\in opt}} & \parbox[t]{1.5cm}{\centering \textbf{Avg \\ time \\overall}} & \parbox[t]{1.3cm}{\centering \textbf{Avg \\ \% time \\in \\subpr}}   \\
    			\midrule
				
				{\multirow{2}[1]{*}{50}} & 0.110 & 9.0 & 5  &    & 0.00 & 0.08 & 0.08 & 5   & 0.01 & 0.01 & 5  &    & 0.00 & 0.10 & 0.10 & 25.34 \\
				& 0.293 & 9.0 & 5  &    & 0.00 & 0.10 & 0.10 & 5   & 0.02 & 0.02 & 5  &    & 0.00 & 0.14 & 0.14 & 35.13 \\
				\midrule
				{\multirow{2}[0]{*}{100}} & 0.096 & 17.8 & 5  &    & 0.00 & 0.20 & 0.20 & 5   & 0.02 & 0.02 & 5  &    & 0.00 & 0.24 & 0.24 & 42.40 \\
				 & 0.300 & 17.8 & 5  &    & 0.00 & 1.04 & 1.04 & 5   & 0.06 & 0.06 & 5  &    & 0.00 & 0.26 & 0.26 & 55.77 \\
				\midrule
				{\multirow{2}[0]{*}{150}} & 0.098 & 27.6 & 5  &    & 0.00 & 0.76 & 0.76 & 5   & 0.08 & 0.08 & 5  &    & 0.00 & 0.38 & 0.38 & 57.27 \\
				& 0.298 & 26.6 & 5  &    & 0.00 & 1.59 & 1.59 & 5   & 0.09 & 0.09 & 5  &    & 0.00 & 0.41 & 0.41 & 58.39 \\
				\midrule
				{\multirow{2}[0]{*}{200}} & 0.107 & 36.8 & 5  &    & 0.00 & 1.71 & 1.71 & 5   & 0.86 & 0.86 & 5  &    & 0.00 & 0.46 & 0.46 & 69.35 \\
				& 0.304 & 36.0 & 5  &    & 0.00 & 7.41 & 7.41 & 5   & 50.36 & 50.36 & 5  &    & 0.00 & 0.70 & 0.70 & 75.50 \\
				\midrule
				{\multirow{2}[0]{*}{250}} & 0.112 & 45.0 & 5  &    & 0.00 & 4.60 & 4.60 & 5   & 1.29 & 1.29 & 5  &    & 0.00 & 0.43 & 0.43 & 62.99 \\
				& 0.304 & 45.8 & 5  &    & 0.00 & 16.56 & 16.56 & 5   & 243.26 & 243.26 & 5  &    & 0.00 & 0.78 & 0.78 & 77.36 \\
				\midrule
				{\multirow{2}[0]{*}{300}} & 0.110 & 54.4 & 5  &    & 0.00 & 10.13 & 10.13 & 5   & 14.05 & 14.05 & 5  &    & 0.00 & 0.46 & 0.46 & 73.36 \\
				& 0.302 & 54.2 & 5  &    & 0.00 & 94.55 & 94.55 & 5   & 475.23 & 475.23 & 5  &    & 0.00 & 2.17 & 2.17 & 72.51 \\
				\midrule
				{\multirow{2}[0]{*}{350}} & 0.117 & 63.6 & 5  &    & 0.00 & 22.88 & 22.88 & 5   & 163.62 & 163.62 & 5  &    & 0.00 & 1.21 & 1.21 & 84.68 \\
				& 0.301 & 62.8 & 5  &    & 0.00 & 223.55 & 223.55 & 5   & 671.28 & 671.28 & 5  &    & 0.00 & 1.50 & 1.50 & 82.32 \\
				\midrule
				{\multirow{2}[0]{*}{400}} & 0.111 & 71.4 & 5  &    & 0.00 & 33.68 & 33.68 & 5   & 221.57 & 221.57 & 5  &    & 0.00 & 1.06 & 1.06 & 81.53 \\
				& 0.315 & 73.0 & 5  &    & 0.00 & 494.34 & 494.34 & 0  &    & 1200.00 & 5  &    & 0.00 & 1.92 & 1.92 & 76.12 \\
				\midrule
				{\multirow{2}[0]{*}{450}} & 0.117 & 82.2 & 5  &    & 0.00 & 261.58 & 261.58 & 4   & 190.43 & 392.34 & 5  &    & 0.00 & 1.06 & 1.06 & 83.63 \\
				& 0.309 & 81.6 & 4  & 100.00 & 20.00 & 733.12 & 826.50 & 0  &    & 1200.00 & 5  &    & 0.00 & 1.87 & 1.87 & 77.57 \\
				\midrule
				{\multirow{2}[1]{*}{500}} & 0.117 & 90.6 & 5  &    & 0.00 & 449.23 & 449.23 & 1   & 4.32 & 960.86 & 5  &    & 0.00 & 1.90 & 1.90 & 87.44 \\
				& 0.296 & 91.4 & 3  & 87.50 & 35.00 & 961.09 & 1056.65 & 0   &    & 1200.00 & 5  &    & 0.00 & 2.17 & 2.17 & 84.64 \\
				\midrule
				&    &    & {\bf 97} & {\bf \phantom{0}93.75} & {\bf \phantom{0}2.75} & {\bf 165.91} & {\bf \phantom{0}175.36} & {\bf 80}  & {\bf 119.80} & {\bf \phantom{0}339.75} & {\bf 100} & {\bf -} & {\bf 0.00} & {\bf 0.96} & {\bf 0.96} & {\bf 68.16} \\				
				\bottomrule				
			\end{tabular}%
		}
\end{sidewaystable}%

\begin{sidewaystable}[htbp]
	\centering
	\caption{Experimental results for high-density perfect graph instances with medium-sized clusters	\label{tab:perfectmedium_high}}
	\resizebox{\textwidth}{!}{
		\begin{tabular}{ccccccSScSScccSSS}
    			\toprule
    			&    &    & \multicolumn{5}{c}{\textbf{IP formulation}} & \multicolumn{3}{c}{\textbf{B\&P}} & \multicolumn{6}{c}{\textbf{Cutting Plane}}\\
    			\cmidrule(lr){4-8}\cmidrule(lr){9-11}\cmidrule(lr){12-17}
    			\parbox[t]{1.3cm}{\centering $\boldsymbol{n}$}  &  \parbox[t]{1.3cm}{\centering \textbf{Avg\\density}} & \parbox[t]{1.3cm}{\centering \textbf{Avg \\ \# \\ clust}} & \parbox[t]{1.3cm}{\centering \textbf{\#\\opt}} & \parbox[t]{1.3cm}{\centering \textbf{Avg \\ \% gap \\in \\ nonopt}} & \parbox[t]{1.3cm}{\centering \textbf{Avg \\ \% gap \\overall}} & \parbox[t]{1.3cm}{\centering \textbf{Avg \\ time \\in opt}} & \parbox[t]{1.3cm}{\centering \textbf{Avg \\ time \\overall}} &
    			\parbox[t]{1.3cm}{\centering \textbf{\#\\opt}} & \parbox[t]{1.3cm}{\centering \textbf{Avg \\ time \\in opt}} & \parbox[t]{1.3cm}{\centering \textbf{Avg \\ time \\overall}} &
    			\parbox[t]{1.3cm}{\centering \textbf{\#\\opt}} & \parbox[t]{1.3cm}{\centering \textbf{Avg \\ \% gap \\in \\ nonopt}} & \parbox[t]{1.3cm}{\centering \textbf{Avg \\ \% gap \\overall}} & \parbox[t]{1.3cm}{\centering \textbf{Avg \\ time \\in opt}} & \parbox[t]{1.3cm}{\centering \textbf{Avg \\ time \\overall}} & \parbox[t]{1.3cm}{\centering \textbf{Avg \\ \% time \\in \\subpr}}   \\
    			
				\midrule
				{\multirow{2}[2]{*}{50}} & 0.495 & {9.0} & 5  &    & 0.00 & 0.15 & {0.15} & 5  & 0.04 & {0.04} & 5  &    & 0.00 & 0.22 & 0.22 & {45.21} \\
				{} & 0.710 & {8.8} & 5  &    & 0.00 & 0.32 & {0.32} & 5   & 1.04 & {1.04} & 5  &    & 0.00 & 0.29 & 0.29 & {55.70} \\
				\midrule
				{\multirow{2}[2]{*}{100}} & 0.488 & {17.6} & 5  &    & 0.00 & 3.20 & {3.20} & 5  & 3.44 & {3.44} & 5  &    & 0.00 & 0.47 & 0.47 & {66.76} \\
				{} & 0.705 & {18.0} & 5  &    & 0.00 & 1.62 & {1.62} & 5   & 9.91 & {9.91} & 5  &    & 0.00 & 1.64 & 1.64 & {60.08} \\
				\midrule
				{\multirow{2}[2]{*}{150}} & 0.498 & {26.6} & 5  &    & 0.00 & 7.38 & {7.38} & 5   & 20.24 & {20.24} & 5  &    & 0.00 & 0.57 & 0.57 & {73.52} \\
				{} & 0.693 & {27.8} & 5  &    & 0.00 & 8.94 & {8.94} & 5   & 37.06 & {37.06} & 5  &    & 0.00 & 2.99 & 2.99 & {52.97} \\
				\midrule
				{\multirow{2}[2]{*}{200}} & 0.496 & {35.6} & 5  &    & 0.00 & 19.81 & {19.81} & 5  & 98.02 & {98.02} & 5  &    & 0.00 & 1.08 & 1.08 & {70.71} \\
				{} & 0.703 & {36.2} & 5  &    & 0.00 & 33.79 & {33.79} & 5  & 76.94 & {76.94} & 5  &    & 0.00 & 4.43 & 4.43 & {44.63} \\
				\midrule
				{\multirow{2}[2]{*}{250}} & 0.497 & {45.2} & 5  &    & 0.00 & 58.95 & {58.95} & 5   & 209.77 & {209.77} & 5  &    & 0.00 & 0.65 & 0.65 & {70.86} \\
				{} & 0.693 & {46.4} & 5  &    & 0.00 & 121.05 & {121.05} & 5  & 132.32 & {132.32} & 5  &    & 0.00 & 235.66 & 235.66 & {7.86} \\
				\midrule
				{\multirow{2}[2]{*}{300}} & 0.506 & {55.0} & 5  &    & 0.00 & 195.94 & {195.94} & 5   & 450.78 & {450.78} & 5  &    & 0.00 & 0.98 & 0.98 & {74.14} \\
				{} & 0.691 & {54.4} & 4  & 20.00 & 4.00 & 250.94 & {440.76} & 5   & 226.42 & {226.42} & 4  & 20.00 & 4.00 & 257.77 & 446.21 & {23.36} \\
				\midrule
				{\multirow{2}[2]{*}{350}} & 0.508 & {63.8} & 4  & 100.00 & 20.00 & 407.27 & {565.81} & 5   & 734.28 & {734.28} & 5  &    & 0.00 & 1.71 & 1.71 & {67.28} \\
				{} & 0.698 & {63.2} & 3  & 60.00 & 24.00 & 670.25 & {882.15} & 5   & 287.89 & {287.89} & 4  & 16.67 & 3.33 & 111.61 & 329.29 & {9.15} \\
				\midrule
				{\multirow{2}[2]{*}{400}} & 0.502 & {73.0} & 5  &    & 0.00 & 713.59 & {713.59} & 4   & 640.53 & {752.42} & 5  &    & 0.00 & 3.72 & 3.72 & {54.37} \\
				{} & 0.692 & {73.2} & 0  & 82.86 & 82.86 &    & {1200.00} & 5  & 539.49 & {539.49} & 4  & 37.50 & 7.50 & 33.92 & 267.14 & {20.36} \\
				\midrule
				{\multirow{2}[2]{*}{450}} & 0.507 & {82.8} & 2  & 96.67 & 58.00 & 981.34 & {1112.53} & 3   & 980.21 & {1068.12} & 5  &    & 0.00 & 11.30 & 11.30 & {62.12} \\
				{} & 0.696 & {81.2} & 0  & 95.33 & 95.33 &    & {1200.00} & 5   & 564.83 & {564.83} & 0  & 31.07 & 31.07 &    & 1200.00 & {4.26} \\
				\midrule
				{\multirow{2}[2]{*}{500}} & 0.507 & {91.2} & 0  & 99.20 & 99.20 &    & {1200.00} & 0   &    & {1200.00} & 5  &    & 0.00 & 6.62 & 6.62 & {54.75} \\
				{} & 0.695 & {92.6} & 0  & 100.00 & 100.00 &    & {1200.00} & 5   & 895.48 & {895.48} & 1  & 31.80 & 25.44 & 406.27 & 1041.25 & {3.80} \\
				\midrule
				&    &    & \textbf{73} & \textbf{81.76} & \textbf{24.17} & \textbf{217.16} & \textbf{448.30} & \textbf{92} & \textbf{310.98} & \textbf{365.42} & \textbf{88} & \textbf{27.41} & \textbf{3.57} & \textbf{56.94} & \textbf{177.81} & \textbf{46.09} \\
				\bottomrule				
			\end{tabular}%
		}
	{}
\end{sidewaystable}%

\begin{sidewaystable}[htbp]
	\centering
	\caption{Experimental results for low-density perfect graph instances with large clusters
	\label{tab:perfectlarge_low}}
	\resizebox{\textwidth}{!}{
		\begin{tabular}{c c S[table-format=2.1] 
    		c S[table-format=3.2] S[table-format=1.2] S[table-format=3.2] S[table-format=3.2]
    		c S[table-format=2.2] S[table-format=2.2] 
    		c S[table-format=2.2] S[table-format=1.2] S[table-format=1.2] S[table-format=1.2] S[table-format=2.2] }
    			\toprule
    			&    &    & \multicolumn{5}{c}{\textbf{IP formulation}} & \multicolumn{3}{c}{\textbf{B\&P}} & \multicolumn{6}{c}{\textbf{Cutting Plane}}\\
    			\cmidrule(lr){4-8}\cmidrule(lr){9-11}\cmidrule(lr){12-17}
    			\parbox[t]{1.3cm}{\centering $\boldsymbol{n}$}  &  \parbox[t]{1.3cm}{\centering \textbf{Avg\\density}} & \parbox[t]{1.3cm}{\centering \textbf{Avg \\ \# \\ clust}} & \parbox[t]{1.3cm}{\centering \textbf{\#\\opt}} & \parbox[t]{1.3cm}{\centering \textbf{Avg \\ \% gap \\in \\ nonopt}} & \parbox[t]{1.3cm}{\centering \textbf{Avg \\ \% gap \\overall}} & \parbox[t]{1.5cm}{\centering \textbf{Avg \\ time \\in opt}} & \parbox[t]{1.5cm}{\centering \textbf{Avg \\ time \\overall}} &
    			\parbox[t]{1.3cm}{\centering \textbf{\#\\opt}} & \parbox[t]{1.5cm}{\centering \textbf{Avg \\ time \\in opt}} & \parbox[t]{1.5cm}{\centering \textbf{Avg \\ time \\overall}} &
    			\parbox[t]{1.3cm}{\centering \textbf{\#\\opt}} & \parbox[t]{1.3cm}{\centering \textbf{Avg \\ \% gap \\in \\ nonopt}} & \parbox[t]{1.3cm}{\centering \textbf{Avg \\ \% gap \\overall}} & \parbox[t]{1.5cm}{\centering \textbf{Avg \\ time \\in opt}} & \parbox[t]{1.5cm}{\centering \textbf{Avg \\ time \\overall}} & \parbox[t]{1.3cm}{\centering \textbf{Avg \\ \% time \\in \\subpr}}   \\
    			\midrule
				
				{\multirow{2}[1]{*}{50}} & 0.110 & 6.6 & 5  &    & 0.00 & 0.06 & 0.06 & 5   & 0.01 & 0.01 & 5  &    & 0.00 & 0.17 & 0.17 & 19.72 \\
				& 0.293 & 6.6 & 5  &    & 0.00 & 0.08 & 0.08 & 5   & 0.01 & 0.01 & 5  &    & 0.00 & 0.19 & 0.19 & 29.79 \\
				\midrule
				{\multirow{2}[0]{*}{100}} & 0.096 & 13.4 & 5  &    & 0.00 & 0.16 & 0.16 & 5   & 0.02 & 0.02 & 5  &    & 0.00 & 0.10 & 0.10 & 30.79 \\
				{} & 0.300 & 13.2 & 5  &    & 0.00 & 0.72 & 0.72 & 5   & 0.03 & 0.03 & 5  &    & 0.00 & 0.16 & 0.16 & 43.86 \\
				\midrule
				{\multirow{2}[0]{*}{150}} & 0.098 & 19.4 & 5  &    & 0.00 & 0.53 & 0.53 & 5  & 0.08 & 0.08 & 5  &    & 0.00 & 0.23 & 0.23 & 38.11 \\
				& 0.298 & 19.8 & 5  &    & 0.00 & 1.01 & 1.01 & 5  & 0.05 & 0.05 & 5  &    & 0.00 & 0.38 & 0.38 & 53.32 \\
				\midrule
				{\multirow{2}[0]{*}{200}} & 0.107 & 26.8 & 5  &    & 0.00 & 1.03 & 1.03 & 5   & 0.11 & 0.11 & 5  &    & 0.00 & 0.38 & 0.38 & 53.28 \\
				& 0.304 & 26.6 & 5  &    & 0.00 & 3.37 & 3.37 & 5   & 0.14 & 0.14 & 5  &    & 0.00 & 0.50 & 0.50 & 64.61 \\
				\midrule
				{\multirow{2}[0]{*}{250}} & 0.112 & 33.2 & 5  &    & 0.00 & 2.67 & 2.67 & 5   & 0.24 & 0.24 & 5  &    & 0.00 & 0.23 & 0.23 & 50.76 \\
				& 0.304 & 33.4 & 5  &    & 0.00 & 9.18 & 9.18 & 5   & 0.36 & 0.36 & 5  &    & 0.00 & 0.37 & 0.37 & 69.05 \\
				\midrule
				{\multirow{2}[0]{*}{300}} & 0.110 & 40.2 & 5  &    & 0.00 & 6.15 & 6.15 & 5   & 0.52 & 0.52 & 5  &    & 0.00 & 0.30 & 0.30 & 53.70 \\
				& 0.302 & 40.2 & 5  &    & 0.00 & 22.06 & 22.06 & 5   & 4.85 & 4.85 & 5  &    & 0.00 & 0.63 & 0.63 & 71.75 \\
				\midrule
				{\multirow{2}[0]{*}{350}} & 0.117 & 46.2 & 5  &    & 0.00 & 14.54 & 14.54 & 5   & 0.75 & 0.75 & 5  &    & 0.00 & 0.29 & 0.29 & 60.45 \\
				& 0.301 & 47.6 & 5  &    & 0.00 & 52.97 & 52.97 & 5   & 85.89 & 85.89 & 5  &    & 0.00 & 0.72 & 0.72 & 70.16 \\
				\midrule
				{\multirow{2}[0]{*}{400}} & 0.111 & 53.2 & 5  &    & 0.00 & 22.06 & 22.06 & 5   & 1.24 & 1.24 & 5  &    & 0.00 & 0.43 & 0.43 & 66.04 \\
				& 0.315 & 52.6 & 5  &    & 0.00 & 113.07 & 113.07 & 5   & 1.51 & 1.51 & 5  &    & 0.00 & 0.90 & 0.90 & 81.08 \\
				\midrule
				{\multirow{2}[0]{*}{450}} & 0.117 & 59.0 & 5  &    & 0.00 & 137.30 & 137.30 & 5 & 1.90 & 1.90 & 5  &    & 0.00 & 0.47 & 0.47 & 69.24 \\
				& 0.309 & 59.4 & 5  &    & 0.00 & 439.23 & 439.23 & 5   & 44.07 & 44.07 & 5  &    & 0.00 & 1.24 & 1.24 & 84.12 \\
				\midrule
				{\multirow{2}[1]{*}{500}} & 0.117 & 65.8 & 5  &    & 0.00 & 143.72 & 143.72 & 5   & 2.71 & 2.71 & 5  &    & 0.00 & 0.73 & 0.73 & 76.06 \\
				& 0.296 & 66.4 & 5  &    & 0.00 & 480.92 & 480.92 & 5  & 42.40 & 42.40 & 5  &    & 0.00 & 0.94 & 0.94 & 82.08 \\
				\midrule
				&    &    & {\bf 100} & {\bf -} & {\bf 0.00} & {\bf \phantom{0}72.54} & {\bf \phantom{0}72.54} & {\bf 100} & {\bf \phantom{0}9.34} & {\bf \phantom{0}9.34} & {\bf 100} & {\bf -} & {\bf 0.00} & {\bf 0.47} & {\bf 0.47} & {\bf 58.40} \\			
				\bottomrule			
			\end{tabular}%
		}
\end{sidewaystable}%

\begin{sidewaystable}[htbp]
	\centering
	\caption{Experimental results for high-density perfect graph instances with large clusters	\label{tab:perfectlarge_high}}
	\resizebox{\textwidth}{!}{
	\begin{tabular}{c c S[table-format=2.1] 
		c S[table-format=3.2] S[table-format=3.2] S[table-format=3.2] S[table-format=4.2]
		c S[table-format=4.2] S[table-format=4.2] 
		c S[table-format=2.2] S[table-format=1.2] S[table-format=2.2] S[table-format=2.2] S[table-format=2.2] }
			\toprule
			&    &    & \multicolumn{5}{c}{\textbf{IP formulation}} & \multicolumn{3}{c}{\textbf{B\&P}} & \multicolumn{6}{c}{\textbf{Cutting Plane}}\\
			\cmidrule(lr){4-8}\cmidrule(lr){9-11}\cmidrule(lr){12-17}
			\parbox[t]{1.3cm}{\centering $\boldsymbol{n}$}  &  \parbox[t]{1.3cm}{\centering \textbf{Avg\\density}} & \parbox[t]{1.3cm}{\centering \textbf{Avg \\ \# \\ clust}} & \parbox[t]{1.3cm}{\centering \textbf{\#\\opt}} & \parbox[t]{1.3cm}{\centering \textbf{Avg \\ \% gap \\in \\ nonopt}} & \parbox[t]{1.3cm}{\centering \textbf{Avg \\ \% gap \\overall}} & \parbox[t]{1.5cm}{\centering \textbf{Avg \\ time \\in opt}} & \parbox[t]{1.5cm}{\centering \textbf{Avg \\ time \\overall}} &
			\parbox[t]{1.3cm}{\centering \textbf{\#\\opt}} & \parbox[t]{1.5cm}{\centering \textbf{Avg \\ time \\in opt}} & \parbox[t]{1.5cm}{\centering \textbf{Avg \\ time \\overall}} &
			\parbox[t]{1.3cm}{\centering \textbf{\#\\opt}} & \parbox[t]{1.3cm}{\centering \textbf{Avg \\ \% gap \\in \\ nonopt}} & \parbox[t]{1.3cm}{\centering \textbf{Avg \\ \% gap \\overall}} & \parbox[t]{1.5cm}{\centering \textbf{Avg \\ time \\in opt}} & \parbox[t]{1.5cm}{\centering \textbf{Avg \\ time \\overall}} & \parbox[t]{1.3cm}{\centering \textbf{Avg \\ \% time \\in \\subpr}}   \\
			\midrule
			
			{\multirow{2}[2]{*}{50}} & 0.495 & 6.4 & 5  &    & 0.00 & 0.11 & 0.11 & 5   & 0.01 & 0.01 & 5  &    & 0.00 & 0.23 & 0.23 & 33.81 \\
			& 0.710 & 7.0 & 5  &    & 0.00 & 0.25 & 0.25 & 5   & 0.38 & 0.38 & 5  &    & 0.00 & 0.26 & 0.26 & 59.92 \\
			\midrule
			{\multirow{2}[2]{*}{100}} & 0.488 & 13.2 & 5  &    & 0.00 & 2.13 & 2.13 & 5  & 0.05 & 0.05 & 5  &    & 0.00 & 0.22 & 0.22 & 53.08 \\
			& 0.705 & 13.4 & 5  &    & 0.00 & 4.12 & 4.12 & 5   & 5.83 & 5.83 & 5  &    & 0.00 & 0.77 & 0.77 & 69.83 \\
			\midrule
			{\multirow{2}[2]{*}{150}} & 0.498 & 19.0 & 5  &    & 0.00 & 1.94 & 1.94 & 5   & 0.62 & 0.62 & 5  &    & 0.00 & 0.35 & 0.35 & 64.90 \\
			& 0.693 & 20.2 & 5  &    & 0.00 & 4.65 & 4.65 & 5   & 26.65 & 26.65 & 5  &    & 0.00 & 0.93 & 0.93 & 61.71 \\
			\midrule
			{\multirow{2}[2]{*}{200}} & 0.496 & 26.6 & 5  &    & 0.00 & 7.13 & 7.13 & 5   & 12.97 & 12.97 & 5  &    & 0.00 & 0.98 & 0.98 & 58.60 \\
			& 0.703 & 26.8 & 5  &    & 0.00 & 15.67 & 15.67 & 5   & 55.31 & 55.31 & 5  &    & 0.00 & 4.13 & 4.13 & 51.20 \\
			\midrule
			{\multirow{2}[2]{*}{250}} & 0.497 & 33.2 & 5  &    & 0.00 & 19.66 & 19.66 & 5   & 62.03 & 62.03 & 5  &    & 0.00 & 0.79 & 0.79 & 77.09 \\
			& 0.693 & 33.4 & 5  &    & 0.00 & 42.21 & 42.21 & 5   & 129.70 & 129.70 & 5  &    & 0.00 & 4.05 & 4.05 & 51.07 \\
			\midrule
			{\multirow{2}[2]{*}{300}} & 0.506 & 40.0 & 5  &    & 0.00 & 47.66 & {47.66} & 5   & 123.92 & 123.92 & 5  &    & 0.00 & 1.36 & 1.36 & 80.21 \\
			& 0.691 & 40.6 & 5  &    & 0.00 & 118.47 & 118.47 & 5  & 244.15 & 244.15 & 5  &    & 0.00 & 5.04 & 5.04 & 44.38 \\
			\midrule
			{\multirow{2}[2]{*}{350}} & 0.508 & 46.6 & 5  &    & 0.00 & 274.99 & 274.99 & 5   & 394.37 & 394.37 & 5  &    & 0.00 & 2.25 & 2.25 & 82.15 \\
			& 0.698 & 46.8 & 4  & 100.00 & 20.00 & 208.45 & 406.76 & 5  & 407.26 & 407.26 & 5  &    & 0.00 & 8.06 & 8.06 & 40.88 \\
			\midrule
			{\multirow{2}[2]{*}{400}} & 0.502 & 52.8 & 5  &    & 0.00 & 298.38 & 298.38 & 5   & 728.67 & 728.67 & 5  &    & 0.00 & 2.58 & 2.58 & 64.96 \\
			{} & 0.692 & 52.2 & 3  & 100.00 & 40.00 & 354.49 & 692.69 & 5  & 658.65 & 658.65 & 5  &    & 0.00 & 4.23 & 4.23 & 46.87 \\
			\midrule
			{\multirow{2}[2]{*}{450}} & 0.507 & 59.6 & 3  & 100.00 & 40.00 & 538.64 & 803.18 & 2   & 372.79 & 869.11 & 5  &    & 0.00 & 1.92 & 1.92 & 81.49 \\
			& 0.696 & 60.6 & 4  & 100.00 & 20.00 & 896.27 & 957.01 & 5   & 733.06 & 733.06 & 5  &    & 0.00 & 85.12 & 85.12 & 12.77 \\
			\midrule
			{\multirow{2}[2]{*}{500}} & 0.507 & 66.0 & 1  & 100.00 & 80.00 & 960.58 & 1152.12 & 2   & 903.36 & 1081.34 & 5  &    & 0.00 & 4.65 & 4.65 & 69.60 \\
			& 0.695 & 65.0 & 0  & 100.00 & 100.00 &    & 1200.00 & 3   & 1038.45 & 1103.07 & 5  &    & 0.00 & 40.48 & 40.48 & 25.01 \\
			\midrule 
			&    &    & {\bf 85} & {\bf 100.00} & {\bf \phantom{0}15.00} & {\bf 199.78} & {\bf \phantom{0}302.46} & {\bf 92}  & {\bf \phantom{0}294.91} & {\bf \phantom{0}331.86} & {\bf 100} & {\bf -} & {\bf 0.00} & {\bf \phantom{0}8.42} & {\bf \phantom{0}8.42} & {\bf 56.48} \\
			\bottomrule				
		\end{tabular}%
	}
\end{sidewaystable}%

A more detailed summary of experimental results for instances with cluster sizes 4--7 and 6--9 are provided in Tables \ref{tab:perfectmedium_low} to \ref{tab:perfectlarge_high}. When we compare the results in Tables \ref{tab:perfectsmall_low}--\ref{tab:perfectsmall_high} to those in Tables \ref{tab:perfectmedium_low}--\ref{tab:perfectlarge_high}, we notice that the performances improve with increased cluster sizes and hence decreased number of clusters. 
For a given $n$ value, the number of variables and constraints in the IP formulation decreases with the increase in the average size of the clusters, which leads to a shrinkage in problem size and hence to improved performance for IP.
We observe that, for high densities, the largest $n$ value for which IP can optimally solve at least one instance rises from 300 to 450 and 500 in instances with medium-sized and large clusters, respectively.
As for the B\&P method, there is no $n$ value above which it cannot deliver an optimal solution anymore, even for the low-density instances where it is weaker.
In case of medium-sized clusters and low densities, the cutting plane algorithm outperforms the other two by optimally solving all instances in under one second, whereas IP and B\&P do so for respectively 97\% and 80\% of these instances while yielding considerably longer solution times as well. 
When the clusters are large, the cutting plane method is able to deliver optimal solutions for all the instances within a few seconds most of the time, and it becomes the outperforming method even in high densities.

As an additional set of experiments, we compare the performance of our cutting plane algorithm for general perfect graphs with that designed for the subclasses of perfect graphs investigated in \citep{seker2019decomposition}. The subproblems in these three graph classes, which are permutation, generalized split, and chordal graphs, were solved via specialized combinatorial algorithms that are polynomial-time, whereas the maximum clique algorithm MCS by \cite{tomita2010simple} is not so, though it runs quite efficiently in practice. Using the same experimental environment as in \citep{seker2019decomposition}, we run our cutting plane algorithm for general perfect graphs on the test instances from the three subclasses of perfect graphs. The number of vertices of these instances range from 100 to 500 for permutation and generalized split graphs, and from 100 to 1000 for chordal graphs. The average edge densities are the same as here; namely, 0.1, 0.3, 0.5, and 0.7. The total number of instances tested are 1200 for chordal graphs, and 600 for the other two classes.

Table \ref{tab:permgrsummary} summarizes the results for permutation graphs. The structure of this table is the same as Table \ref{tab:perfectsummary}, except that the two sets of columns list the results for our algorithms for the general and special cases, respectively. Our first observation from this table is that the general algorithm surprisingly yields better results than the one tailored for permutation graphs. The improvement is particularly evident in high-density instances in terms of all three measures we list here. The average percentage gap value in high-density instances with small clusters drops from 23.32\% to 12.98\%, and the overall average of optimality gap improves by 5\%. 

\begin{table}[htbp]
	\caption{Summary of experimental results for permutation graph instances}	\label{tab:permgrsummary}
	\centering
		\resizebox{0.8\textwidth}{!}{
			\begin{tabular}{cc S[table-format=3.0] S[table-format=2.2] S[table-format=3.2] c S[table-format=3.0] S[table-format=2.2] S[table-format=3.2]}
				\toprule
				&    &  \multicolumn{3}{c}{\textbf{Cutting plane}} & & \multicolumn{3}{c}{\textbf{Decomp for perm gr \citep{seker2019decomposition}}}\\
				\cmidrule(lr){3-6}\cmidrule(lr){7-9}
				\parbox[t]{1.8cm}{\centering \textbf{Sizes of \\clusters}} &\parbox[t]{1.8cm}{\centering \textbf{Density}} & \parbox[t]{2cm}{\centering \textbf{\#\\opt}} & \parbox[t]{2cm}{\centering \textbf{Avg \\ \% gap}} & \parbox[t]{2.2cm}{\centering \textbf{Avg \\ time }} && \parbox[t]{2cm}{\centering \textbf{\#\\opt}} & \parbox[t]{2cm}{\centering \textbf{Avg \\ \% gap}} & \parbox[t]{2.2cm}{\centering \textbf{Avg \\ time }} \\
				\midrule
				
				\multirow{3}[2]{*}{small} & low & 99 & 0.25 & 14.41 &    & 98 & 0.40 & 47.16 \\
				& high & 52 & 12.98 & 641.35 &    & 37 & 23.32 & 780.47 \\[0.05cm]
				& all & 151 & 6.62 & 327.88 &    & 135 & 11.86 & 413.81 \\
				\midrule
				\multirow{3}[2]{*}{medium} & low & 100 & 0.00 & 1.18 &    & 100 & 0.00 & 3.81 \\
				& high & 53 & 19.44 & 649.30 &    & 42 & 29.78 & 753.63 \\[0.05cm]
				& all & 153 & 9.72 & 325.24 &    & 142 & 14.89 & 378.72 \\
				\midrule
				\multirow{3}[2]{*}{large} & low & 100 & 0.00 & 0.93 &    & 100 & 0.00 & 1.08 \\
				& high & 64 & 18.07 & 482.31 &    & 52 & 26.95 & 627.73 \\[0.05cm]
				& all & 164 & 9.03 & 241.62 &    & 152 & 13.47 & 314.41 \\
				\midrule
				\multicolumn{2}{c}{} & {\bf 468} & {\bf \phantom{0}8.46} & {\bf 298.25} &    & {\bf 429} & {\bf 13.41} & {\bf 368.98} \\
									
				\bottomrule				
			\end{tabular}%
	}
{}
\end{table}%

Table \ref{tab:gsgsummary} contains the summary of the results we obtained for generalized split graphs and has the same format as the previous one. As opposed to the case of permutation graphs, there is no monotonicity in the change of the performance between the cutting plane and the decomposition algorithm, not even within a given density or cluster size category. The overall performance deteriorates when we use our algorithm for general perfect graphs, but it is still comparable to that of the algorithm tailored for generalized split graphs.

\begin{table}[H]
	\caption{Summary of experimental results for generalized split graph instances}
	\label{tab:gsgsummary}
	\centering
	   	\resizebox{0.8\textwidth}{!}{
		\begin{tabular}{cc S[table-format=3.0] S[table-format=2.2] S[table-format=3.2] c S[table-format=3.0] S[table-format=2.2] S[table-format=3.2]}
				\toprule
				&    &  \multicolumn{3}{c}{\textbf{Cutting plane}} & & \multicolumn{3}{c}{\textbf{Decomp for GSG \citep{seker2019decomposition}}}\\
				\cmidrule(lr){3-6}\cmidrule(lr){7-9}
				\parbox[t]{1.8cm}{\centering \textbf{Sizes of \\clusters}} &\parbox[t]{1.8cm}{\centering \textbf{Density}} & \parbox[t]{2cm}{\centering \textbf{\#\\opt}} & \parbox[t]{2cm}{\centering \textbf{Avg \\ \% gap}} & \parbox[t]{2.2cm}{\centering \textbf{Avg \\ time }} && \parbox[t]{2cm}{\centering \textbf{\#\\opt}} & \parbox[t]{2cm}{\centering \textbf{Avg \\ \% gap}} & \parbox[t]{2.2cm}{\centering \textbf{Avg \\ time }} \\
				\midrule
				
				\multirow{3}[2]{*}{small} & low & 81 & 4.95 & 234.61 &    & 76 & 5.50 & 310.33 \\
				& high & 62 & 10.26 & 495.17 &    & 66 & 7.58 & 427.44 \\[0.05cm]
				& all & 143 & 7.61 & 364.89 &    & 142 & 6.54 & 368.88 \\
				\midrule
				\multirow{3}[2]{*}{medium} & low & 74 & 10.75 & 321.26 &    & 77 & 9.03 & 287.05 \\
				& high & 57 & 16.59 & 547.13 &    & 64 & 11.63 & 454.85 \\[0.05cm]
				& all & 131 & 13.67 & 434.19 &    & 141 & 10.33 & 370.95 \\
				\midrule
				\multirow{3}[2]{*}{large} & low & 70 & 13.67 & 393.82 &    & 72 & 12.55 & 364.49 \\
				& high & 67 & 15.33 & 443.80 &    & 66 & 12.66 & 431.26 \\[0.05cm]
				& all & 137 & 14.50 & 418.81 &    & 138 & 12.60 & 397.88 \\
				\midrule
				\multicolumn{2}{c}{} & {\bf 411} & {\bf 11.92} & {\bf 405.96} &    & {\bf 421} & {\bf \phantom{0}9.82} & {\bf 379.24} \\				
				\bottomrule				
			\end{tabular}%
	}
{}
\end{table}%

In the last part of this final group of experiments, we present the computational results for chordal graphs in Table \ref{tab:chordalsummary} in the same structure as the previous two. The algorithm we present in \citep{seker2019decomposition} yields the best results in the class of chordal graphs by solving all of the instances to optimality in approximately 0.16 seconds. In this case, the difference between the two methods is clear; the one custom-tailored for chordal graphs clearly outperforms in all respects. 
\begin{table}[H]
	\caption{Summary of experimental results for chordal graph instances}
	\label{tab:chordalsummary}
	\centering
		\resizebox{0.8\textwidth}{!}{
		\begin{tabular}{cc S[table-format=3.0] S[table-format=2.2] S[table-format=4.2] c S[table-format=3.0] S[table-format=1.2] S[table-format=1.2]}
				\toprule
				&    &  \multicolumn{3}{c}{\textbf{Cutting plane}} & & \multicolumn{3}{c}{\textbf{Decomp for chordal gr \citep{seker2019decomposition}}}\\
				\cmidrule(lr){3-6}\cmidrule(lr){7-9}
				\parbox[t]{1.8cm}{\centering \textbf{Sizes of \\clusters}} &\parbox[t]{1.8cm}{\centering \textbf{Density}} & \parbox[t]{2cm}{\centering \textbf{\#\\opt}} & \parbox[t]{2cm}{\centering \textbf{Avg \\ \% gap}} & \parbox[t]{2.3cm}{\centering \textbf{Avg \\ time }} && \parbox[t]{2cm}{\centering \textbf{\#\\opt}} & \parbox[t]{2cm}{\centering \textbf{Avg \\ \% gap}} & \parbox[t]{2.4cm}{\centering \textbf{Avg \\ time }} \\
				\midrule
				
				\multirow{3}[2]{*}{small} & low   & 53    & 25.56 & 897.58 &       & 200   & 0.00  & 0.20 \\
				& high  & 17    & 43.65 & 1113.71 &       & 200   & 0.00  & 0.14 \\[0.05cm]
				& all   & 70 & 23.72 & 811.29 &       & 400 & 0.00 & 0.17 \\
				\midrule
				\multirow{3}[2]{*}{medium} & low   & 68    & 31.91 & 813.37 &       & 200   & 0.00  & 0.19 \\
				& high  & 26    & 51.79 & 1055.41 &       & 200   & 0.00  & 0.13 \\[0.05cm]
				& all   & 94 & 25.42 & 674.19 &       & 400 & 0.00 & 0.16 \\
				\midrule
				\multirow{3}[2]{*}{large} & low   & 68    & 36.69 & 809.30 &       & 200   & 0.00  & 0.19 \\
				& high  & 35    & 51.55 & 1001.10 &       & 200   & 0.00  & 0.13 \\[0.05cm]
				& all   & 103 & 24.07 & 618.65 &       & 400 & 0.00 & 0.16 \\
				\midrule
				\multicolumn{2}{c}{} & {\bf 267} & {\bf 24.41} & {\bf \phantom{0}701.37} &       & {\bf 1200} & {\bf 0.00} & {\bf 0.16} \\
				\bottomrule				
			\end{tabular}%
	}
{}
\end{table}%


\section{Conclusion and Future Research}
\label{section:conclusion}

In this study, we presented an exact cutting plane algorithm for the selective graph coloring problem in perfect graphs, which is a generalization of the method presented in \citep{seker2019decomposition}. 
We also introduced an algorithm to generate random perfect graphs, which, to the best of our knowledge is the first algorithm for this purpose in the literature. 
Given an input graph together with vertex set partition, we search for and optimal selection in the master problem of our cutting plane procedure, and in the subproblem, we find a maximum clique in the graph induced by the selected set of vertices by making use of two alternative methods. 
One of these makes use of semidefinite programming models and works in polynomial time in theory in the class of perfect graphs. 
The other one is a general-purpose maximum clique algorithm from the literature and performs quite efficiently in practice. 
We conducted an extensive computational study on a large set of test instances we generated randomly, in order to test the performance of our algorithm in comparison to those of an IP formulation and a branch-and-price algorithm from the literature.
The computational results show that the cutting plane algorithm significantly improved the solvability of the problem in general, but most evidently in low-density graph instances. 
We also compared the performance of our cutting plane algorithm for perfect graphs in its general form to that of our previous algorithm tailored for three subclasses of perfect graphs; namely, permutation, generalized split, and chordal graphs. The use of our cutting plane algorithm for general perfect graphs resulted in better performance in permutation graphs, and marked deterioration in chordal graphs regardless of edge density. In the class of generalized split graphs, the overall performance became worse with the algorithm for general perfect graphs.

Our presented solution strategy can be adapted to general graphs where the clique number is not necessarily equal to the chromatic number. 
We carried out preliminary experiments to test the performance of such an adaptation (see Appendix). 
In order to apply this exact solution approach to general graphs, a minimum coloring algorithm is required, while it is still possible and indeed helpful to utilize a maximum clique algorithm as well.
Since the maximum clique algorithm by \cite{tomita2010simple} works for any graph, we keep utilizing it to generate constraint \eqref{cliquecut}. 
In order to generate constraint \eqref{cut1}, we employed two different minimum coloring algorithms in our subproblem, which are two best-performing ones according to a survey on vertex coloring problems by \cite{malaguti2010survey}:
(i) a branch-and-cut algorithm by \cite{mendez2006branch}, and (ii) a column generation algorithm by \cite{mehrotra1996column}, for which we used the implementation by \cite{held2012maximum}. 
Unfortunately, neither (i) nor (ii) performs very well, such that the majority of the solution time is spent in the subproblem of our cutting plane method.
The reason for the coloring task to constitute the bottleneck of our cutting plane approach on general graphs is not only that the coloring problem turns out to be much harder in practice as compared to the maximum clique problem, but also that the weakness of constraint \eqref{cut1} induces more calls to the coloring problem. 
Hence, further research should investigate how to design and incorporate alternative stronger cuts to improve the performance of our cutting plane method. Moreover, we note that our approach of taking advantage of the graph structures can be potentially applied to other exact methods to solve {\sc Sel-Col}.

As for perfect graph generation, our proposed algorithm can be enriched by including different perfection-preserving methods. 
One can also address the more challenging open question of generating every perfect graph with strictly positive probability, or at least being able to give a measure of the quality or the distribution of the generated perfect graphs \citep{roussel2009strong}.

\section*{Acknowledgements}
We are grateful to Pinar Heggernes for her helpful suggestions in the perfect graph generation algorithm when she was on her sabbatical leave at Boğaziçi University and when the first author was a visiting scholar at the University of Bergen. We are also thankful to three anonymous referees, whose constructive comments and useful suggestions helped us improve the content and exposition of the paper. 

\bibliographystyle{elsarticle-harv}
\bibliography{references}

\begin{thebibliography}{47}
\expandafter\ifx\csname natexlab\endcsname\relax\def\natexlab#1{#1}\fi
\expandafter\ifx\csname url\endcsname\relax
  \def\url#1{\texttt{#1}}\fi
\expandafter\ifx\csname urlprefix\endcsname\relax\def\urlprefix{URL }\fi

\bibitem[{Alizadeh(1991)}]{alizadeh1991interior}
Alizadeh, F., 1991. Combinatorial optimization with interior point methods and
  semidefinite matrices. Ph.D. thesis, University of Minnesota, Minneapolis,
  USA.

\bibitem[{Andreou et~al.(2005)Andreou, Papadopoulou, Spirakis, Theodorides, and
  Xeros}]{andreou2005generating}
Andreou, M.~I., Papadopoulou, V.~G., Spirakis, P.~G., Theodorides, B., Xeros,
  A., 2005. Generating and radiocoloring families of perfect graphs. In:
  Experimental and Efficient Algorithms. Springer, pp. 302--314.

\bibitem[{Berge and Minieka(1973)}]{berge1973graphs}
Berge, C., Minieka, E., 1973. Graphs and Hypergraphs. Vol.~7. North-Holland
  Amsterdam.

\bibitem[{Bixby(1984)}]{bixby1984composition}
Bixby, R.~E., 1984. A composition for perfect graphs. Ann. Discrete Math 21,
  221--224.

\bibitem[{Brandst{\"a}dt et~al.(1999)Brandst{\"a}dt, Le, and
  Spinrad}]{brandstadt1999graph}
Brandst{\"a}dt, A., Le, V.~B., Spinrad, J.~P., 1999. Graph Classes: A Survey.
  Vol.~3. SIAM Monographs on Discrete Mathematics and Applications,
  Philadelphia, PA.

\bibitem[{Burlet and Fonlupt(1984)}]{burlet1984polynomial}
Burlet, M., Fonlupt, J., 1984. Polynomial algorithm to recognize a {M}eyniel
  graph. In: Progress in Combinatorial Optimization. Elsevier, pp. 69--99.

\bibitem[{Chudnovsky et~al.(2005)Chudnovsky, Cornu{\'e}jols, Liu, Seymour, and
  Vu{\v{s}}kovi{\'{c}}}]{chudnovsky2005recognizing}
Chudnovsky, M., Cornu{\'e}jols, G., Liu, X., Seymour, P., Vu{\v{s}}kovi{\'{c}},
  K., Mar 2005. Recognizing {B}erge graphs. Combinatorica 25~(2), 143--186.

\bibitem[{Chudnovsky and Penev(2013)}]{chudnovsky2013structure}
Chudnovsky, M., Penev, I., 2013. The structure of bull-free perfect graphs.
  Journal of Graph Theory 74~(1), 1--31.

\bibitem[{Chudnovsky et~al.(2006)Chudnovsky, Robertson, Seymour, and
  Thomas}]{chudnovsky2006strong}
Chudnovsky, M., Robertson, N., Seymour, P., Thomas, R., 2006. The strong
  perfect graph theorem. Annals of Mathematics 164~(1), 51--229.

\bibitem[{Chv{\'a}tal(1984)}]{chvatal1984notes}
Chv{\'a}tal, V., 1984. Notes on perfect graphs. Progress in Combinatorial
  Optimization, 107--115.

\bibitem[{Cunningham and Edmonds(1980)}]{william1980combinatorial}
Cunningham, W.~H., Edmonds, J., 1980. A combinatorial decomposition theory.
  Canadian Journal of Mathematics 32~(3), 734--765.

\bibitem[{Demange et~al.(2015)Demange, Ekim, Ries, and
  Tanasescu}]{demange2015some}
Demange, M., Ekim, T., Ries, B., Tanasescu, C., 2015. On some applications of
  the selective graph coloring problem. European Journal of Operational
  Research 240~(2), 307--314.

\bibitem[{Erd{\H{o}}s and R{\'e}nyi(1959)}]{erdos1959random}
Erd{\H{o}}s, P., R{\'e}nyi, A., 1959. On random graphs. Publicationes
  Mathematicae Debrecen 6, 290--297.

\bibitem[{Frota et~al.(2010)Frota, Maculan, Noronha, and
  Ribeiro}]{frota2010branch}
Frota, Y., Maculan, N., Noronha, T.~F., Ribeiro, C.~C., 2010. A branch-and-cut
  algorithm for partition coloring. Networks 55~(3), 194--204.

\bibitem[{Furini et~al.(2018)Furini, Malaguti, and Santini}]{furini2017exact}
Furini, F., Malaguti, E., Santini, A., 2018. An exact algorithm for the
  partition coloring problem. Computers \& Operations Research 92, 170--181.

\bibitem[{Galli and Letchford(2017)}]{galli2017lovasz}
Galli, L., Letchford, A.~N., 2017. On the lov{\'a}sz theta function and some
  variants. Discrete Optimization 25, 159--174.

\bibitem[{Golumbic(2004)}]{golumbic2004algorithmic}
Golumbic, M.~C., 2004. Algorithmic Graph Theory and Perfect Graphs. Vol.~57.
  Elsevier.

\bibitem[{Gr{\"o}tschel et~al.(1981)Gr{\"o}tschel, Lov{\'a}sz, and
  Schrijver}]{grotschel1981ellipsoid}
Gr{\"o}tschel, M., Lov{\'a}sz, L., Schrijver, A., 1981. The ellipsoid method
  and its consequences in combinatorial optimization. Combinatorica 1~(2),
  169--197.

\bibitem[{Gr{\"o}tschel et~al.(1984)Gr{\"o}tschel, Lov{\'a}sz, and
  Schrijver}]{grotschel1984polynomial}
Gr{\"o}tschel, M., Lov{\'a}sz, L., Schrijver, A., 1984. Polynomial algorithms
  for perfect graphs. North-Holland Mathematics Studies 88, 325--356.

\bibitem[{Gr{\"o}tschel et~al.(1988)Gr{\"o}tschel, Lov{\'a}sz, and
  Schrijver}]{grotschel1988geometric}
Gr{\"o}tschel, M., Lov{\'a}sz, L., Schrijver, A., 1988. Geometric Algorithms
  and Combinatorial Optimization. Springer, New York.

\bibitem[{Hale(1980)}]{hale1980frequency}
Hale, W.~K., Dec 1980. Frequency assignment: Theory and applications.
  Proceedings of the IEEE 68~(12), 1497--1514.

\bibitem[{Held et~al.(2012)Held, Cook, and Sewell}]{held2012maximum}
Held, S., Cook, W., Sewell, E.~C., 2012. Maximum-weight stable sets and safe
  lower bounds for graph coloring. Mathematical Programming Computation 4~(4),
  363--381.

\bibitem[{Hoshino et~al.(2011)Hoshino, Frota, and De~Souza}]{hoshino2011branch}
Hoshino, E.~A., Frota, Y.~A., De~Souza, C.~C., 2011. A branch-and-price
  approach for the partition coloring problem. Operations Research Letters
  39~(2), 132--137.

\bibitem[{Hougardy(2006)}]{hougardy2006classes}
Hougardy, S., 2006. Classes of perfect graphs. Discrete Mathematics
  306~(19-20), 2529--2571.

\bibitem[{Knuth(1994)}]{knuth1994sandwich}
Knuth, D.~E., 1994. The sandwich theorem. The Electronic Journal of
  Combinatorics 1~(1), 1.

\bibitem[{Lewis(2015)}]{lewis2015guide}
Lewis, R., 2015. A Guide to Graph Colouring. Springer.

\bibitem[{Li and Simha(2000)}]{li2000partition}
Li, G., Simha, R., 2000. The partition coloring problem and its application to
  wavelength routing and assignment. In: Proceedings of the First Workshop on
  Optical Networks.

\bibitem[{Lov{\'a}sz(1972)}]{lovasz1972normal}
Lov{\'a}sz, L., 1972. Normal hypergraphs and the perfect graph conjecture.
  Discrete Mathematics 2~(3), 253--267.

\bibitem[{Lov{\'a}sz(1979)}]{lovasz1979shannon}
Lov{\'a}sz, L., 1979. On the {S}hannon capacity of a graph. IEEE Transactions
  on Information theory 25~(1), 1--7.

\bibitem[{Malaguti and Toth(2010)}]{malaguti2010survey}
Malaguti, E., Toth, P., 2010. A survey on vertex coloring problems.
  International Transactions in Operational Research 17~(1), 1--34.

\bibitem[{Markenzon et~al.(2008)Markenzon, Vernet, and
  Araujo}]{markenzon2008two}
Markenzon, L., Vernet, O., Araujo, L.~H., 2008. Two methods for the generation
  of chordal graphs. Annals of Operations Research 157~(1), 47--60.

\bibitem[{Marx(2004)}]{marx2004graph}
Marx, D., 2004. Graph colouring problems and their applications in scheduling.
  Periodica Polytechnica Electrical Engineering 48~(1-2), 11--16.

\bibitem[{McDiarmid and Yolov(2016)}]{mcdiarmid2016random}
McDiarmid, C., Yolov, N., 2016. Random perfect graphs. Random Structures \&
  Algorithms.

\bibitem[{McKay(2016)}]{mckay}
McKay, B., 2016. Graphs.
  \url{http://users.cecs.anu.edu.au/~bdm/data/graphs.html} Accessed 25 December
  2017.

\bibitem[{Mehrotra and Trick(1996)}]{mehrotra1996column}
Mehrotra, A., Trick, M.~A., 1996. A column generation approach for graph
  coloring. {INFORMS} Journal on Computing 8~(4), 344--354.

\bibitem[{M{\'e}ndez-D{\'\i}az and Zabala(2006)}]{mendez2006branch}
M{\'e}ndez-D{\'\i}az, I., Zabala, P., 2006. A branch-and-cut algorithm for
  graph coloring. Discrete Applied Mathematics 154~(5), 826--847.

\bibitem[{Mittelmann(2018)}]{mittelmann2018benchmarks}
Mittelmann, H., 2018. Benchmarks for optimization software.
  \url{http://plato.asu.edu/bench.html}.

\bibitem[{Nesterov and Nemirovskii(1994)}]{nesterov1994interior}
Nesterov, Y., Nemirovskii, A., 1994. Interior-point polynomial algorithms in
  convex programming. Vol.~13. Siam.

\bibitem[{Roussel et~al.(2009)Roussel, Rusu, and Thuillier}]{roussel2009strong}
Roussel, F., Rusu, I., Thuillier, H., 2009. The strong perfect graph
  conjecture: 40 years of attempts, and its resolution. Discrete Mathematics
  309~(20), 6092--6113.

\bibitem[{{\c S}eker et~al.(2019){\c S}eker, Ekim, and
  Ta\c{s}k{\i}n}]{seker2019decomposition}
{\c S}eker, O., Ekim, T., Ta\c{s}k{\i}n, Z.~C., 2019. A decomposition approach
  to solve the selective graph coloring problem in some perfect graph families.
  Networks 73~(2), 145--169.

\bibitem[{{\c{S}}eker et~al.(2017){\c{S}}eker, Heggernes, Ekim, and
  Ta{\c{s}}k{\i}n}]{seker2017linear}
{\c{S}}eker, O., Heggernes, P., Ekim, T., Ta{\c{s}}k{\i}n, Z.~C., 2017.
  Linear-time generation of random chordal graphs. In: Lecture Notes in
  Computer Science. Vol. 10236. pp. 442--453.

\bibitem[{Sherali and Smith(2001)}]{sherali2001improving}
Sherali, H.~D., Smith, J.~C., 2001. Improving discrete model representations
  via symmetry considerations. Management Science 47~(10), 1396--1407.

\bibitem[{Spinrad(2003)}]{spinrad2003efficient}
Spinrad, J.~P., 2003. Efficient graph representations. American Mathematical
  Society.

\bibitem[{Tomita and Kameda(2007)}]{tomita2007efficient}
Tomita, E., Kameda, T., 2007. An efficient branch-and-bound algorithm for
  finding a maximum clique with computational experiments. Journal of Global
  Optimization 37~(1), 95--111.

\bibitem[{Tomita et~al.(2010)Tomita, Sutani, Higashi, Takahashi, and
  Wakatsuki}]{tomita2010simple}
Tomita, E., Sutani, Y., Higashi, T., Takahashi, S., Wakatsuki, M., 2010. A
  simple and faster branch-and-bound algorithm for finding a maximum clique.
  In: International Workshop on Algorithms and Computation. Springer, pp.
  191--203.

\bibitem[{Wu and Hao(2015)}]{wu2015review}
Wu, Q., Hao, J.-K., 2015. A review on algorithms for maximum clique problems.
  European Journal of Operational Research 242~(3), 693--709.

\bibitem[{Y{\i}ld{\i}r{\i}m and Fan-Orzechowski(2006)}]{yildirim2006extracting}
Y{\i}ld{\i}r{\i}m, E.~A., Fan-Orzechowski, X., 2006. On extracting maximum
  stable sets in perfect graphs using {L}ov{\'a}sz's theta function.
  Computational Optimization and Applications 33~(2-3), 229--247.

\end{thebibliography}

\newpage
\section*{Appendix: Extension to General Graphs}

As we noted in Section \ref{section:conclusion}, the presented cutting plane algorithm can be adapted as an exact method to solve {\sc Sel-Col} in general graphs, i.e., graphs with no particular structure, where the chromatic number is not necessarily equal to the clique number. 
This can be done by only generating constraint \eqref{cut1} in the subproblem.
However, since we observed that constraint \eqref{cliquecut} can be produced efficiently and facilitate the solution procedure considerably, we add constraints of type \eqref{cliquecut} as long as they are violated, and start adding \eqref{cut1} only afterwards, i.e., when no violated constraint \eqref{cliquecut} is left.

The maximum clique algorithm by \cite{tomita2010simple} works for any graph.
Therefore, we keep utilizing it for general graphs to generate constraint \eqref{cliquecut}.
In order to generate constraint \eqref{cut1} for general graphs, we need to employ a minimum coloring algorithm in our subproblem.
According to a survey on vertex coloring problems by \cite{malaguti2010survey}, two best performing exact algorithms for graph coloring are a column generation approach by \cite{mehrotra1996column} and a branch-and-cut algorithm by \cite{mendez2006branch}. 
We used our own implementation of the method by \cite{mendez2006branch} as the first alternative, and an implementation of the column generation approach of  \cite{mehrotra1996column} offered by \cite{held2012maximum} as the second one, which is referred to as {\it exactcolors}. 

In the sequel, we present a preliminary set of computational results we obtained using the extension of our cutting plane approach for graphs with no particular structure. 
Using the two alternative methods to solve the minimum coloring problem in our subroutine, we conducted experiments on a set of 24 \cite{erdos1959random} type random graphs having 50 to 500 vertices and four different average edge densities, which are 0.1, 0.3, 0.5, and 0.7.  
For each pair of $n$ and edge density value, we used a single graph instance and coupled it with a set of clusters, each containing 2 to 5 vertices. 
We set a time limit of 1200 seconds in each run, as before. 

Table \ref{tab:generalgraphs} provides a summary of our results for general graphs, with a structure similar to the previous ones. The two sets of columns with headings ``Cutting plane with B\&C" and ``Cutting plane with {\it exactcolors}" stand for the cutting plane algorithm coupled with the minimum coloring algorithms by  \cite{mendez2006branch} and \cite{held2012maximum}, respectively. Columns ``\% time in cliq" and ``\% time in col" respectively give the percentage of solution time spent in solving the maximum clique and minimum coloring problems in the subproblem. 
\begin{table}[htbp]
  \caption{Summary of experimental results for general graph instances}
  \label{tab:generalgraphs}
    \centering
  	\resizebox{0.99\textwidth}{!}{
    \begin{tabular}{c S[table-format=3.1] 
    c S[table-format=2.2] S[table-format=4.2]
    c S[table-format=3.2]
    c S[table-format=2.2] S[table-format=1.2] S[table-format=2.2] S[table-format=4.2]
    c S[table-format=2.2] S[table-format=1.2] S[table-format=2.2] S[table-format=4.2] }
    \toprule
          &       & \multicolumn{3}{c}{\bf IP formulation} & \multicolumn{2}{c}{\bf B\&P} & \multicolumn{5}{c}{\bf Cutting plane with B\&C} & \multicolumn{5}{c}{\bf Cutting plane with \emph{exactcolors}} \\
          \cmidrule(lr){3-5}\cmidrule(lr){6-7}\cmidrule(lr){8-12}\cmidrule(lr){13-17}
			\parbox[t]{1cm}{\centering $\boldsymbol{n}$} & \parbox[t]{1cm}{\centering \bf Avg \# clust} & \parbox[t]{1cm}{\centering \bf \# opt} & \parbox[t]{1cm}{\centering \bf \% gap} & \parbox[t]{1.2cm}{\centering \bf Avg time} & \parbox[t]{1cm}{\centering \bf \# opt} & \parbox[t]{1.2cm}{\centering \bf Avg time} & \parbox[t]{1cm}{\centering \bf \# opt} & \parbox[t]{1cm}{\centering \bf \% gap} & \parbox[t]{1.2cm}{\centering \bf \% time in cliq} & \parbox[t]{1cm}{\centering \bf \% time in col } & \parbox[t]{1.2cm}{\centering \bf Avg time} & \parbox[t]{1cm}{\centering \bf \# opt} & \parbox[t]{1cm}{\centering \bf \% gap} & \parbox[t]{1.2cm}{\centering \bf \% time in cliq} & \parbox[t]{1cm}{\centering \bf \% time in col} & \parbox[t]{1.2cm}{\centering \bf Avg time} \\
			\midrule
            50    & 14.5 & 4     & 0.00  & 2.45  & 4     & 1.84  & 4     & 0.00  & 0.25  & 71.69 & 12.60 & 4     & 0.00  & 0.80  & 16.52 & 1.62 \\[0.05cm]
            100   & 28.8 & 3     & 4.17  & 478.60 & 4     & 15.26 & 2     & 14.58 & 0.28  & 82.89 & 697.49 & 2     & 15.71 & 0.58  & 28.22 & 629.53 \\[0.05cm]
            200   & 57.8 & 1     & 52.59 & 940.58 & 4     & 123.32 & 1     & 49.62 & 0.01  & 99.89 & 1197.50 & 0     & 67.06 & 0.10  & 92.08 & 1200.00 \\[0.05cm]
            300   & 84.0 & 0     & 87.50 & 1200.00 & 4     & 510.17 & 0     & 65.00 & 0.01  & 99.95 & 1200.00 & 0     & 75.99 & 0.02  & 99.37 & 1200.00 \\[0.05cm]
            400   & 114.5 & 0     & 99.78 & 1200.00 & 3     & 904.82 & 0     & 76.85 & 0.01  & 99.97 & 1200.00 & 0     & 88.43 & 0.00  & 97.53 & 1200.00 \\[0.05cm]
            500   & 144.8 & 0     & 99.81 & 1200.00 & 1     & 1172.96 & 0     & 76.53 & 0.05  & 99.94 & 1200.00 & 0     & 90.92 & 0.02  & 88.90 & 1200.00 \\
            \midrule
          &       & {\bf 8} & {\bf 57.31} & {\bf \phantom{0}836.94} & {\bf 20} & {\bf 454.73} & {\bf 7} & {\bf 47.10} & {\bf 0.10} & {\bf 92.39} & {\bf \phantom{0}917.93} & {\bf 6} & {\bf 56.35} & {\bf 0.25} & {\bf 70.44} & {\bf \phantom{0}905.19} \\
          \bottomrule
    \end{tabular}%
}%
{}%
\end{table}%

The overall summary of the results delivered in the last row of Table \ref{tab:generalgraphs} indicates that B\&P method outperforms all of the others by optimally solving 83.3\% of the instances in approximately half the time the cutting plane method takes on the average. 
The fraction of time our cutting plane method spends in the minimum coloring problem is nearly one in many cases, implying that the coloring task constitutes the bottleneck of our algorithm. The reason for this is not only that the coloring algorithms take too much time; the constraint \eqref{cut1} being weak induces more calls to the coloring problem, hence increasing the total amount of time spent for it. So, it is the coexistence of the weakness of constraint \eqref{cut1} and the elevated solution times spent for the coloring problem that leads to unsatisfactory performance of the cutting plane algorithm in general graph instances.


\end{document}